\theoremstyle{plain}
\newtheorem{theorem}{Theorem}
\newtheorem{lemma}{Lemma}
\newtheorem{corollary}{Corollary}
\newtheorem{definition}{Definition}
\newtheorem{proposition}{Proposition}
\newcommand{\Indi}{\mathbf{I}}
\newcommand{\MV}{\mathsf{MV}}
\newcommand{\PART}{\ensuremath{\mathsf{PART}}}
\newcommand{\SAPath}{\mathsf{A}}
\newcommand{\LAPath}{\mathsf{B}}
\newcommand{\parent}{\ensuremath{\mathsf{par}}}
\newcommand{\edgep}{\ensuremath{\mathsf{ep}}}
\newcommand{\cyclet}{\ensuremath{\mathsf{cyc}}}
\newcommand{\lca}{\ensuremath{\mathsf{lca}}}
\newcommand{\ests}{\ensuremath{\hat{s}}}
\title{A Subquadratic-Time Distributed Algorithm for Exact Maximum Matching\\\vspace{3mm}}
\author{
Naoki Kitamura\footnote{Nagoya Institute of Technology, Nagoya, Aichi, Japan. E-mail: ktmr522@yahoo.co.jp.}
\and Taisuke Izumi\footnote{Osaka University, Suita, Osaka, Japan. E-mail: t-izumi@ist.osaka-u.ac.jp.}
}
\date{}
\begin{document}

\maketitle


\begin{abstract}
For a graph $G=(V,E)$, finding a set of disjoint edges that do not share any vertices is called a matching problem,
and finding the maximum matching is a fundamental problem in the theory of distributed graph algorithms.
Although local algorithms for the approximate maximum matching problem have been widely studied, exact algorithms has not been much studied. 
In fact, no exact maximum matching algorithm that is faster than the trivial upper bound of $O(n^{2})$ rounds is known for the general instance.
In this paper, we propose a randomized $O(s_{\max}^{3/2}+\log n)$-round algorithm in the CONGEST model, where $s_{\max}$ is the size of maximum matching.
This is the first exact maximum matching algorithm in $o(n^2)$ rounds for general instances in the CONGEST model.
The key technical ingredient of our result is a distributed
algorithms of finding an augmenting path in $O(s_{\max})$ rounds, which
is based on a novel technique of constructing a \emph{sparse certificate} of augmenting paths, which is a subgraph of the input graph preserving at least one augmenting path. To establish a highly parallel construction of sparse certificates, we also propose a new characterization of sparse certificates, which might also be of independent interest.
\end{abstract}

\section{Introduction}
\label{sec:introduction}
\subsection{Background and Our Result}
\label{subsec:background}
A fundamental graph problem is the \emph{maximum (uweighted) matching} problem of finding the maximum cardinality subset of edges not sharing endpoints. In this study, we address the problem of computing exact maximum matchings in a distributed setting, namely, the \emph{CONGEST model}. The CONGEST model is a standard computational model for distributed graph algorithms, where
the network is modeled as an undirected graph $G = (V, E)$ of $n$ nodes and $m$ edges. Each node executes the deployed algorithm following round-based synchrony, and each link
can transfer a small message of $O(\log n)$ bits per round. However, the limited 
bandwidth in the CONGEST model precludes a trivial universal solution for every graph problem, where 
the leader node collects all the topological information of $G$ and solves the problem using a centralized algorithm. This approach takes $O(n^2)$ rounds in the worst case of $m = \Omega(n^2)$. The technical challenge in designing CONGEST algorithms concerns how each node computes
a fragment of the solution without information on the whole input instance. The recent development of design techniques for CONGEST algorithms has yielded many efficient solutions for various graph problems such as the minimum spanning tree\cite{SD98,KKOI2019,GH16,jurdzinski2018mst,GL18-2,HIZ16-2}, distance problems including shortest-path computation\cite{H18,LP13-2,HW12,Nanongkai14,BN18,GL18,FN18}, and flow and cut\cite{GF15,DEMN20,ghaffari2013distributed,nanongkai2014almost,daga2019distributed}. 
Owing to the existence of 
the $O(n^2)$-round universal algorithm, the weakest non-trivial challenge in the design of a CONGEST algorithms is to achieve a subquadradic $o(n^2)$-round upper bound. In contrast
to the universal upper bound, all the problems listed above belong to the class of \emph{global} problems exhibiting an $\Omega(D)$-round lower bound, where $D$ is the diameter of the input graph $G$. Thus, the tight round 
complexities of global problems lie between $\Theta(n^2)$ and $\Theta(D)$. For many of global problems, near-tight complexity bounds, typically $\tilde{\Theta}(\sqrt{n} + D)$ rounds or $\tilde{\Theta}(n)$ rounds, have been proved\cite{FHW12,bacrach2019hardness,lowerbound}. 

Many studies in the context of approximation algorithms provide insight into the complexity of the maximum matching problem. Table~\ref{tab:MM} lists the known algorithms, where $s_{\max}$ is defined the cardinality of the maximum matching. While $O(1)$ approximation admits local solutions (i.e., $o(D)$-round algorithms), 
the complexity of the exact maximum matching problem makes it expensive. Precisely, 
following the lower bound of Ben-Basat et al.~\cite{BKS18}, there exists an instance of diameter $\Omega(n)$ and maximum matching size $\Omega(n)$ that exhibits an $\Omega(n)$-round lower bound. This lower bound was originally proved in the \emph{LOCAL} model, but it trivially holds in the CONGEST model as well. Therefore, the exact maximum matching 
problem is placed in the class of global problems. Parametrizing the complexity by both $n$ and $D$, it is possible to obtain the non-trivial lower bound of $\Omega(D + \sqrt{n})$ rounds for the exact computation of the maximum matching\footnote{This lower bound was not explicitly shown in previous literatures, but it is derived using the lower-bound graph almost same as that used in the lower-bound proof for the fractional maximum matching by Ahmadi et~al.\cite{AKO18}}. However, the corresponding upper bound is yet to be found. For the exact maximum matching problem in general graphs, no known algorithm achieves non-trivial $O(n^2)$ rounds. In addition, Bacrach et al.~\cite{bacrach2019hardness} pointed out that
the bound of $\Omega(\sqrt{n} + D)$ rounds is a strong barrier because the standard framework of two-party communication complexity is unlikely to deduce any improved lower bound. These observations demonstrate the difficulty of revealing the inherent complexity of the exact maximum matching in the CONGEST model.

The objective of this paper is to shed light on the complexity gap of the exact maximum matching problem in the CONGEST model. We present the main theorem of this paper in the CONGEST model below. 
\begin{theorem}
\label{theo:mm}
For any input graph $G$, there exists a randomized CONGEST algorithm to compute the maximum matching that terminates within $O\left(s_{\max}^{3/2}+\log n \right)$ rounds with probability $1 - 1/n^{\Theta(1)}$.
\end{theorem}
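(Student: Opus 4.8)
The plan is to build on the classical augmenting-path paradigm. By Berge's theorem, a matching $M$ is maximum if and only if $G$ admits no $M$-augmenting path, and augmenting $M$ along such a path increases $|M|$ by exactly one. Starting from $M=\emptyset$ and repeatedly invoking the $O(s_{\max})$-round augmenting-path subroutine announced in the abstract therefore yields a maximum matching after at most $s_{\max}$ augmentations, and termination is detected for free the first time the subroutine reports that no augmenting path remains. A direct implementation of this loop costs $s_{\max}\cdot O(s_{\max})=O(s_{\max}^2)$ rounds, so the whole difficulty is to shave a $\sqrt{s_{\max}}$ factor off the number of times the expensive search must be run at full cost.

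To obtain the $O(s_{\max}^{3/2})$ bound I would combine this primitive with the Hopcroft--Karp length analysis, run in two regimes separated by the threshold $t=\lceil\sqrt{s_{\max}}\rceil$. First I would always augment along \emph{shortest} augmenting paths and process them in phases, where a single phase augments along a maximal set of vertex-disjoint shortest augmenting paths at once. The standard fact that the shortest augmenting-path length strictly increases (by at least $2$) from one phase to the next shows that at most $t=O(\sqrt{s_{\max}})$ phases occur while that length stays below $2t+1$; if each phase can be realized in $O(s_{\max})$ rounds, this first regime costs $O(s_{\max}^{3/2})$. Once every augmenting path has length at least $2t+1$, a symmetric-difference argument with a maximum matching $M^{*}$ bounds the residual work: each of the $s_{\max}-|M|$ vertex-disjoint augmenting paths in $M\triangle M^{*}$ uses at least $t$ edges of $M$, while $M$ has at most $s_{\max}$ edges, so the number of remaining augmentations is at most $s_{\max}/t=O(\sqrt{s_{\max}})$. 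Handling these one at a time with the single-path subroutine costs a further $O(\sqrt{s_{\max}})\cdot O(s_{\max})=O(s_{\max}^{3/2})$ rounds, giving the claimed total. Note that $s_{\max}$ need not be known in advance: the threshold can be maintained adaptively by geometric guessing, since the transition between the two regimes is read off directly from the path lengths returned by the search.

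The main obstacle I expect is realizing a full phase --- augmenting along a maximal vertex-disjoint family of shortest augmenting paths --- within the $O(s_{\max})$-round budget in the CONGEST model and for general, \emph{non-bipartite} graphs. Non-bipartiteness forces one to cope with blossoms, the odd alternating cycles that Edmonds' algorithm contracts, and emulating this contraction under the $O(\log n)$-bandwidth restriction without paying extra factors is delicate; this is precisely where the sparse-certificate construction and its new characterization should do the heavy lifting, by compressing all relevant augmenting structure into a subgraph sparse enough to be searched in parallel. Finally, the additive $\log n$ term reflects the randomized ingredients --- most plausibly the construction of the sparse certificate together with the amplification needed to make each search succeed with probability $1-1/n^{\Theta(1)}$ --- and the overall high-probability guarantee then follows from a union bound over the $O(\sqrt{s_{\max}})$ phases and $O(\sqrt{s_{\max}})$ residual augmentations.
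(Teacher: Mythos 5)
Your second regime (once every augmenting path has length at least $2t+1$, only $O(\sqrt{s_{\max}})$ augmentations remain, each handled in $O(s_{\max})$ rounds) is exactly what the paper does in its second loop. The gap is in your first regime. You propose to run genuine Hopcroft--Karp \emph{phases}, each of which augments along a \emph{maximal} set of vertex-disjoint shortest augmenting paths, and you need each such phase to cost $O(s_{\max})$ rounds in a general non-bipartite graph. Nothing in the paper provides this primitive, and it is not known: the $O(s_{\max})$-round subroutine announced in the abstract finds \emph{some} nonempty set of disjoint augmenting paths via a sparse certificate, but the certificate is only guaranteed to preserve reachability (hence the existence of at least one augmenting path); it does not carry enough structure to extract a maximal disjoint family of \emph{shortest} augmenting paths, which is precisely the step that makes distributed Hopcroft--Karp hard outside the bipartite case. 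You flag this as "the main obstacle" and hope the sparse certificate resolves it, but it does not, so the $O(\sqrt{s_{\max}})$ bound on the number of expensive searches in the short-path regime is unsupported; done one augmentation at a time with the $O(s_{\max})$-round routine, that regime costs $O(s_{\max}^2)$.

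The paper closes this hole by a different mechanism: it introduces a second subroutine that finds an augmenting path in $O(\ell^2)$ rounds, where $\ell$ is the current shortest augmenting-path length, and augments \emph{one path (or one nonempty disjoint set) per iteration} rather than a maximal phase. By the Hopcroft--Karp length bound, when the matching is $k$ short of optimal the iteration costs $O((s_{\max}/k)^2)$, and summing over $k$ from $\sqrt{\hat{s}}$ to $\hat{s}$ gives $\hat{s}^2\sum_{k\geq\sqrt{\hat{s}}}1/k^2=O(\hat{s}^{3/2})$; no parallel phase is ever needed. Two smaller inaccuracies: the number of iterations is fixed in advance using a constant-factor overestimate $\hat{s}=2|M^{\ast}|$ obtained from an $O(\log n)$-round maximal matching, which is where the additive $\log n$ actually comes from (not from success-probability amplification); and termination is governed by this iteration count rather than detected by the subroutine.
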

To the best of our knowledge, the proposed algorithm is the first to compute the exact maximum matching algorithm in $o(n^2)$ rounds for general input instances in the CONGEST model.

\subsection{Technical Outline}
Our algorithm follows the standard technique of finding \emph{augmenting paths}. If an augmenting path is found, the current matching is improved by flipping the labels of matching edges and non-matching edges along the path. It is well known that the current matching is the maximum if and only if there exists no augmenting path in $G$ with respect to the current matching. Hence, the maximum matching problem is reduced to the task of 
finding augmenting paths $s_{\max}$ times. In the CONGEST model, this approach faces difficulty in the situation where any augmenting path with respect to the current 
matching is long (i.e., consisting of $\Theta(n)$ edges). It should be emphasized that BFS-like approaches do not work for finding augmenting paths in general graphs because the shortest alternating walk is not necessarily simple because of the existence of odd cycles. Thus, it is not trivial to even compute an augmenting path with a running time linearly 
dependent on its length. The key ingredient of our approach is two new algorithms for finding augmenting paths. They run in $O(\ell^2)$ rounds and $O(s_{\max})$ rounds respectively, where $\ell$ is the length of the shortest augmenting path for the current matching. Roughly, our algorithm switches between these two algorithms according to the current matching size. The running-time bound is obtained using the following seminal observation by 
Hopcroft and Karp:
\begin{proposition}[Hopcroft and Karp~\cite{HK73}]
\label{pro:hk}
Given a matching $M \subseteq E$ of a graph $G$, there always exists an augmenting path of length less 
than $\lfloor2s_{\max}/k\rfloor$ if the current matching size is at most the maximum matching size $s_{\max}$ minus $k$.
\end{proposition}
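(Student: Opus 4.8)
The plan is to use the classical symmetric-difference argument of Hopcroft and Karp. Fix a maximum matching $M^\ast$ with $|M^\ast| = s_{\max}$, and suppose the current matching $M$ satisfies $|M| \le s_{\max} - k$. First I would consider the edge set $H = M \oplus M^\ast$. Since every vertex is incident to at most one edge of $M$ and at most one edge of $M^\ast$, it has degree at most two in $H$; hence $H$ decomposes into vertex-disjoint simple paths and simple cycles, and along each such component the edges alternate between $M$ and $M^\ast$.

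Next I would account for the size gap $|M^\ast| - |M| \ge k$ using this decomposition. Every cycle, and every path of even length, contains equally many $M$-edges and $M^\ast$-edges, so it contributes nothing to the difference $|M^\ast| - |M|$. The only components that contribute are the odd-length alternating paths whose two endpoints are unmatched in $M$ (equivalently, that begin and end with an $M^\ast$-edge); each such path has exactly one more $M^\ast$-edge than $M$-edge and is precisely an augmenting path with respect to $M$. Summing the per-component contributions therefore shows that the number of $M$-augmenting paths appearing in $H$ is at least $|M^\ast| - |M| \ge k$, and these paths are pairwise vertex-disjoint because they are distinct connected components of $H$.

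Finally I would extract a short augmenting path by averaging. An augmenting path of (odd) length $\ell$ contains exactly $(\ell+1)/2$ edges of $M^\ast$, and since the at least $k$ augmenting paths are vertex-disjoint their $M^\ast$-edges form disjoint subsets of $M^\ast$; thus the total number of $M^\ast$-edges they use is at most $|M^\ast| = s_{\max}$. Denoting by $\ell_{\min}$ the length of the shortest of these paths, averaging over at least $k$ paths gives $(\ell_{\min}+1)/2 \le s_{\max}/k$. As the left-hand side is an integer, it is at most $\lfloor s_{\max}/k\rfloor$, whence $\ell_{\min} \le 2\lfloor s_{\max}/k\rfloor - 1 < \lfloor 2 s_{\max}/k\rfloor$, as claimed. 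I expect no serious obstacle here: the structural decomposition is entirely standard, and the only point requiring care is the integer bookkeeping in the last step---specifically, counting $M^\ast$-edges rather than $M$-edges and exploiting the vertex-disjointness of the components---which is exactly what produces the floor in the stated bound.
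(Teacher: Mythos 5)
The paper does not prove Proposition~\ref{pro:hk} at all---it is imported verbatim from Hopcroft and Karp~\cite{HK73}---and your argument is exactly the standard symmetric-difference proof that reference uses: decompose $M \oplus M^\ast$ into alternating paths and cycles, observe that at least $|M^\ast|-|M|\ge k$ components are vertex-disjoint $M$-augmenting paths, and average over the $M^\ast$-edges they consume. Your bookkeeping at the end is also correct, since $(\ell_{\min}+1)/2$ is an integer bounded by $s_{\max}/k$, hence by $\lfloor s_{\max}/k\rfloor$, and $2\lfloor s_{\max}/k\rfloor - 1 \le \lfloor 2s_{\max}/k\rfloor - 1 < \lfloor 2s_{\max}/k\rfloor$ gives precisely the strict inequality in the statement.
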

Our augmenting path algorithms utilize Ahmadi's verification algorithm of maximum matching~\cite{AK20}, in which each node returns the length of the shortest odd/even alternating paths from 
a given source (unmatched) node. The construction of the $O(\ell^2)$-round algorithm is relatively straightforward. It is obtained by iteratively finding the predecessor of each node in an augmenting path by sequential $O(\ell)$ invocations of the verification algorithm. The technical highlight of the proposed algorithm is 
the design of the $O(s_{\max})$-round algorithm. The $O(s_{\max})$-round algorithm constructs a 
\emph{sparse certificate}, which is a sparse (i.e., containing $O(s_{\max})$ edges) subgraph of $G$ preserving the reachability between two nodes by alternating paths. That is, a sparse certificate contains an augmenting path if and only if the 
original graph admits an augmenting path. By the sparseness property, a node can collect all the information on the sparse certificate within $O(s_{\max})$ rounds, trivially allowing the centralized solution of finding augmenting paths. To establish a highly parallel construction of sparse certificates, we also propose a new characterization of
sparse certificates, which might also be of independent interest.
\subsection{Related Works}
In the LOCAL model, it is known that no $o(1/\epsilon)$ algorithm exists for the $(1-\epsilon)$-approximate maximum matching problem~\cite{BKS18}.
Together with the $\Omega(\sqrt{\log n}/\log \log n)$-round lower bound reported by Kuhn et~al.~\cite{KMW16}, the lower bound in the LOCAL model is obtained as $\Omega(1/\epsilon+\sqrt{\log n/\log \log n})=((\log n)/\epsilon)^{\Omega(1)}$.
Ghaffari et~al.~\cite{GKM17} showed a $((\log n)/\epsilon)^{O(1)}$ upper bound for the $(1-\epsilon)$ approximate maximum matching problem. 
By combining these results, we infer that the time complexity of solving the $(1-\epsilon)$ approximate maximum matching problem is  $(\log n/\epsilon)^{\Theta(1)}$ in the LOCAL model.
Ben-Basat et~al. also proved the lower bound as $\Omega(|M|)$ in the LOCAL model~\cite{BKS18}.

Many literatures have addressed the maximum matching problem in the CONGEST model (see Table~\ref{tab:MM}).
Loker et al.~\cite{LPP08} presented the first approximation algorithm in the CONGEST model,
which is a randomized algorithm to compute $(1-\epsilon)$-approximate maximum matching in $O(\log n)$ rounds for any constant $\epsilon>0$.
The running time of the algorithm depends exponentially on $1/\epsilon$.
Bar Yehuda et~al.~\cite{BCGS17} improved the algorithm and proposed an $O(\log \Delta/\log \log \Delta)$-round algorithm of computing $(1-\epsilon)$-approximate matching for any constant $\epsilon>0$, where $\Delta$ is maximum degree of the graph.
Fabin et~al.~\cite{FTR06} has shown of $\Omega(\log \Delta/\log \log \Delta)$ rounds if $\log \Delta\leq \sqrt{\log n}$ holds. Ben-Basat et~al.~\cite{BKS18} proposed a deterministic $\tilde{O}(s_{\max}^2)$-round CONGEST algorithm. They also proposed a $(1/2-\epsilon)$ approximate algorithm in $\tilde{O}(s_{\max}+(s_{\max}/\epsilon)^2)$ rounds.
Ahmadi et~al.~\cite{AKO18} proposed a deterministic $(2/3-\epsilon)$ approximate maximum matching algorithm in general graphs, which runs in $O(\log \Delta/\epsilon^2 + (\log^2 \Delta+\log^{*}n)/\epsilon)$ rounds.
They also presented an $\tilde{O}(M)$-round algorithm and $O((\log^2 \Delta+\log^{*}n)/\epsilon)$-round $(1-\epsilon)$ approximate algorithm in bipartite graphs.
However, no $o(n^{2})$-round algorithm for solving the exact maximum matching problem in the CONGEST model has been proposed so far.

In addition to distributed computing, many studies have considered centralized exact maximum matching algorithms. 
Edmonds presented the first centralized polynomial-time algorithm for the maximum matching problem~\cite{Edmonds,Edmonds2} by following the seminal blossom argument. Hopcroft and Karp proposed a phase-based algorithm of finding multiple augmenting paths~\cite{HK73}. Their algorithm finds a maximal set of pairwise disjoint shortest augmenting paths in each phase. They showed that $O(\sqrt{n})$ phases suffice to compute the maximum matching and proposed an algorithm of implementing one phase in $O(m)$ time for bipartite graphs. Several studies have reported phase-based algorithms for general graphs that attain $O(\sqrt{n}m)$ time~\cite{Blum1990,Vazirani2020,GT91}.

\begin{table*}[tb]
\caption{Lower and upper bounds of the maximum matching in the CONGEST model}
\center
\label{tab:MM}
\begin{tabular}{| l c c c|}
\hline
Algorithm & Time Complexity &Approximation Level  & Remark\\
\hline
Ben-Basat et al.~\cite{BKS18} & $\Omega(|s_{\max}|)$ & exact &LOCAL\\
Fabin et~al.~\cite{FTR06}& $\Omega\left(\frac{\log \Delta}{\log \log \Delta} \right)$& constant $\epsilon$ & $\log \Delta\leq \sqrt{\log n}$\\
Ben-Basat et~al.~\cite{BKS18}& $\Omega\left(\frac{1}{\epsilon}\right)$ & $1-\epsilon$ &LOCAL\\
Kuhn et~al.~\cite{KMW16}& $\Omega\left(\sqrt{\frac{\log n}{\log \log n}}\right)$ & $1-\epsilon$ &LOCAL\\
Ben-Basat et~al.~\cite{BKS18} & $\tilde{O}(s_{\max}^2)$ &exact & \\
Ahmadi et~al.~\cite{AKO18}& $\tilde{O}\left(s_{\max}\right)$& exact & bipartite\\
Bar-Yehuda et~al.~\cite{BCGS17}& $O\left(\frac{\log \Delta}{\log \log \Delta}\right)$ &constant $\epsilon$& \\
Lotker et~al.~\cite{LPP08}& $O\left(\frac{2^{\frac{2}{\epsilon^2}}\log s_{\max}\log n}{\epsilon^4}\right)$ & $1-\epsilon$ & \\
Ahmadi et~al.~\cite{AKO18}& $O\left(\frac{\log^2 \Delta+\log^{*}n}{\epsilon}\right)$& $1-\epsilon$ & bipartite\\
Ben-Basat et~al.~\cite{BKS18} & $\tilde{O}\left(s_{\max}+\left(\frac{s_{\max}}{\epsilon}\right)^2\right)$ & $\frac{1}{2}-\epsilon$ &\\
Ahmadi et~al.~\cite{AKO18}& $O\left(\frac{\log \Delta}{\epsilon^2}+\frac{\log^2 \Delta+\log^{*}n}{\epsilon}\right)$& $\frac{2}{3}-\epsilon$ &\\
\textbf{Our result}& \boldmath{$\tilde{O}\left(s_{\max}^{3/2}\right)$} & \textbf{exact} &\\
\hline
\end{tabular}
\end{table*}
\section{Preliminaries}
\label{sec:pre}
\subsection{CONGEST Model}
The vertex set and edge set of a given graph $G$ are, respectively, denoted by $V(G)$ and 
$E(G)$.
A distributed system is represented by a simple undirected connected graph $G = (V(G), E(G))$.
Let $n$ and $m$ be the numbers of nodes and edges, respectively. 
The diameter of a given subgraph $H \subseteq G$ is denoted by $D(H)$.
Nodes and edges are uniquely identified by integer values, 
which are represented by $O(\log n)$ bits. The set of edges incident to $v \in V(G)$ 
is denoted by $I_G(v)$. In the CONGEST model, the computation 
follows round-based synchrony. In one round,
each node $v$ sends and receives $O(\log n)$-bit messages through the edges in $I_G(v)$ 
and executes local computation following its internal state, local random bits, 
and received messages. It is guaranteed that every message sent in a round is delivered to the destination within the same round. 
Each node has no prior knowledge of the network topology, except for its neighborhood IDs. 
We use the labeling of nodes and/or edges for specifying inputs and outputs of algorithms. 
Each node has information on the label(s) assigned to itself and those assigned to its incident edges.
A \emph{walk} $W$ of $G$ is an alternating sequence $W = v_0, e_1, v_1, e_2, \dots, e_\ell, 
v_\ell$ of vertices and edges such that $e_i = (v_{i-1}, v_i)$ holds for any 
$1 \leq i \leq \ell$. A walk $W$ is often treated as a subgraph of $G$. A walk 
$W = v_0, e_1, v_1, e_2, \dots, e_\ell, v_\ell$ is called a (simple) \emph{path} if 
every vertex in $W$ is distinct. For any walk $W = v_0, e_1, v_1, \dots, v_\ell$ of $G$, 
we define $W \circ u$ as the walk obtained by adding $u$, satisfying $(v_\ell, u) \in 
E(G)$, to the tail of $W$. For any edge $e = (v_\ell, u)$, we also define  
$W \circ e  = W \circ u$. Given a walk $W$ containing a node $u$, we 
denote by $W^p_u$ and $W^s_u$ the prefix of $W$ up to $u$ and the suffix of $W$ from $u$, respectively. We also denote the inversion of the walk $W = v_0, e_1, v_1, \dots, v_\ell$ 
(i.e., the walk $v_\ell, e_\ell, v_{\ell - 1}, e_{\ell -1}, \dots, v_0$) by $\overline{W}$. The length of a walk $P$ is represented by $|P|$.
\subsection{Matching and Augmenting Path}
For a graph $G=(V,E)$, a matching $M\subseteq E$ is a set of edges that do not share endpoints. A node $v$ is called a \emph{matched node} if $M$ intersects $I_G(v)$, 
or an \emph{unmatched node} otherwise.
A path $P = v_0, e_0, v_1, e_1, \dots, v_\ell$ is called an \emph{alternating path} 
if $\Indi_M(e_i) + \Indi_M(e_{i+1}) = 1$ holds for any $1 \leq i \leq 
\ell-1$\footnote{The indicator function $\Indi_X(x)$ returns one if $x \in X$ and 
zero otherwise.}. If the length $|P|$ of $P$ satisfies $|P| \mod 2 = \theta$,
$P$ is called \emph{$\theta$-alternating}. The value $\theta$ is called the 
\emph{parity} of $P$. By definition, any $0$-alternating ($1$-alternating) path 
from an unmatched node $f$ finishes with a non-matching (matching) edge. Oue to a technical 
issue, we regard the path of length zero as a $0$-alternating path. For any 
$\theta \in \{0, 1\}$ and $u, v \in V(G)$, we define
$r^{\theta}(u, v)$ as the length of the shortest $\theta$-alternating path between  
$u$ and $v$. 
An \emph{augmenting path} is an alternating path connecting two unmatched nodes. 
We say that $(G, M)$ has an augmenting path if there exists an augmenting path in $G$ 
with respect to $M$. The following proposition is a well-known fact in the maximum 
matching problem.
\begin{proposition}
Given a matching $M \subseteq E(G)$ of graph $G$, $M$ is the maximum matching if and only 
if $(G, M)$ has no augmenting path.
\end{proposition}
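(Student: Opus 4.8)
The plan is to prove this classical equivalence (Berge's theorem) by the symmetric-difference argument, handling both implications in contrapositive form. I would write $M \triangle M'$ for the symmetric difference of two edge sets and exploit the fact that a disjoint union of two matchings has maximum degree two.

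For the ``if'' direction I would show that an augmenting path forces $M$ to be non-maximum. Let $P = v_0, e_1, v_1, \dots, e_\ell, v_\ell$ be an augmenting path, so $v_0$ and $v_\ell$ are unmatched. Since $v_0$ is unmatched, $e_1 \notin M$, i.e. $\Indi_M(e_1) = 0$; the alternating condition $\Indi_M(e_i) + \Indi_M(e_{i+1}) = 1$ then makes the odd-indexed edges non-matching and the even-indexed edges matching. For the last edge $e_\ell$ to be non-matching (forced by $v_\ell$ being unmatched), $\ell$ must be odd, so $P$ carries exactly $(\ell+1)/2$ non-matching edges and $(\ell-1)/2$ matching edges. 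I would then set $M' = M \triangle E(P)$ and verify two things: $M'$ is a matching (each internal vertex of $P$ retains exactly one incident selected edge, while $v_0, v_\ell$ gain one, and no vertex outside $P$ is affected), and $|M'| = |M| + (\ell+1)/2 - (\ell-1)/2 = |M| + 1$. Hence $M$ is not maximum.

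For the ``only if'' direction I would prove the contrapositive: assuming $M$ is not maximum, pick a matching $M^*$ with $|M^*| > |M|$ and extract an augmenting path from $H = M \triangle M^*$. The structural core is that every vertex has degree at most two in $H$, with at most one incident edge from each of $M$ and $M^*$; therefore each connected component of $H$ is a simple path or a cycle whose edges strictly alternate between $M$ and $M^*$. Cycles contain equally many $M$- and $M^*$-edges, and paths differ by at most one, so $|M^*| > |M|$ forces at least one component to be a path $P$ with more $M^*$-edges than $M$-edges; such a $P$ necessarily begins and ends with $M^*$-edges. I would then confirm that both endpoints of $P$ are unmatched in $M$: an endpoint $v$ has degree one in $H$ and its unique $H$-edge lies in $M^* \setminus M$, so any hypothetical $M$-edge at $v$ would lie either in $M \setminus M^* \subseteq H$ (impossible, as $v$ already has its single $H$-edge) or in $M \cap M^*$ (impossible, as it would be a second $M^*$-edge at $v$, contradicting that $M^*$ is a matching). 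Since $P$ alternates between $M$ and $M^*$, it is an alternating path with respect to $M$, and with both endpoints unmatched it is exactly an augmenting path for $(G, M)$.

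The degree bound, the component decomposition, and the edge counting are all routine; I do not expect a substantial obstacle here, as the result is standard. The only point demanding slight care is the endpoint verification above, ensuring that the extracted path genuinely connects two $M$-unmatched nodes rather than merely terminating at vertices whose $M$-edge happens to lie outside $H$. Once that is settled, both directions close immediately.
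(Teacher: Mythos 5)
Your proof is correct and complete: both the symmetric-difference flip along an augmenting path for the ``if'' direction and the path/cycle decomposition of $M \triangle M^*$ for the ``only if'' direction are carried out carefully, including the endpoint check that the extracted path really terminates at $M$-unmatched vertices. The paper states this proposition as a well-known fact (Berge's theorem) and gives no proof at all, so there is nothing to compare against; your argument is the canonical one the paper is implicitly relying on.
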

\subsection{Approximate Maximum Matching}
Our algorithm uses an $O(1)$-approximate upper bound for the maximum matching size of the input graph. To obtain
the upper bound, we run the $O(\log n)$-round randomized maximal matching algorithm~\cite{II86} as a preprocessing step.
Let $M^{\ast}$ be the computed maximal matching. Since 
any maximal matching is a $(1/2)$-approximate maximum matching, 
one can obtain the bound $2|M^{\ast}| \geq s_{\max}$. The size 
$s_{\max}$ is at least half of the diameter $D(G)$, and thus we can spend $O(D(G)) = O(s_{\max})$ rounds for counting and propagating the number of edges in $M^{\ast}$.
That is, it is possible to provide each node with the value of $2|M^{\ast}|$ by 
the preprocessing of $O(D(G)+\log n)=O(s_{\max}+\log n)$ rounds. In the following argument, we denote $\hat{s} = 2s^{\ast}$, the value of which is available to each 
node.
\subsection{Maximum-Matching Verification Algorithm}
Our algorithm uses the algorithm by Ahmadi et~al.'s~\cite{AK20} for maximum-matching verification as a building block. Although the original algorithm is designed
for the verification of maximum matching, it provides each node with information on the length of alternating paths to the closest unmatched nodes. Precisely, the following lemma holds.
\begin{theorem}[Ahmadi et al.~\cite{AK20}]
\label{theo:verification}
Assume that a graph $G = (V, E)$ and a matching $M \subseteq E$ are given, and let 
$W$ be the set of all unmatched nodes. There exist two $O(\ell)$-round randomized CONGEST algorithms $\MV(M, \ell, f)$ and $\PART(M, \ell)$ that output the following information at every node $v \in V(G)$
with a probability of at least $1-1/n^c$ for an arbitrarily large constant $c > 1$.
\begin{enumerate}
\item Given $M$, a nonnegative integer $\ell$, and a node $f \in W$, 
$\MV(M, \ell, f)$ outputs the pair $(\theta, r^{\theta}(f, v))$ at each node $v$
if $r^{\theta}(f, v)\leq \ell$ holds (if the condition is satisfied for both $\theta = 0$ and $\theta = 1$, $v$ outputs two pairs). The algorithm $\MV(M, \ell, f)$ is initiated only by the node $f$ (with the value $\ell$), and other nodes do not require information on the ID of $f$ and value $\ell$ at the initial stage.
\item The algorithm $\PART(M, \ell)$ outputs a partition $V^1, V^2, \dots, V^{N}$ of $V(G)$ (as the label $i$ for each node in $V^i$) such that (a) the subgraph $G^i$ induced by
$V^i$ contains exactly two unmatched nodes $f^i$ and $g^i$ as well as an augmenting path between $f^i_1$ and $g^i_2$ of length at most $\ell$ and (b) the diameter of $G^{i}$ is $O(\ell)$. 
\end{enumerate}
\end{theorem}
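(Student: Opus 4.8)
The plan is to obtain both $\MV$ and $\PART$ from a single parity-aware breadth-first wavefront over the alternating-path structure rooted at unmatched nodes, and the entire difficulty concentrates in showing that this wavefront reports shortest alternating \emph{path} lengths rather than merely shortest alternating \emph{walk} lengths. For $\MV(M,\ell,f)$ I would first model $\theta$-alternating paths from $f$ as ordinary directed walks in an auxiliary ``parity-doubled'' graph $\vec{G}$ whose node set is $V(G)\times\{0,1\}$: I split each node $v$ into copies $(v,0)$ and $(v,1)$ indexed by the parity of the path length, and place an arc from $(v,\theta)$ to $(u,1-\theta)$ exactly when appending the edge $(v,u)$ preserves alternation (a matching edge may follow a non-matching one and vice versa, as fixed by the convention in the preliminaries), with the free source $f$ emitting its first arc along a non-matching edge. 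A $\theta$-alternating path from $f$ to $v$ is then a directed walk from $(f,0)$ to $(v,\theta)$, so $r^\theta(f,v)$ equals the BFS layer of $(v,\theta)$ in $\vec{G}$ \emph{provided} some shortest such walk is carried by a simple path of $G$. Because every arc of $\vec{G}$ corresponds to one edge of $G$, a single BFS layer costs one CONGEST round: each node forwards the smallest parity-tagged distance it has just learned, a node joins the wavefront the first time it is reached, and after $\ell$ layers every $v$ with $r^\theta(f,v)\le\ell$ can output $(\theta,r^\theta(f,v))$ without prior knowledge of $f$ or $\ell$.

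The hard part will be justifying that this BFS returns shortest \emph{paths} and not just walks. In bipartite graphs the two coincide, but general graphs contain odd cycles, so a shortest directed walk into $(v,\theta)$ may revisit a node of $G$ and fail to be simple---this is precisely the blossom phenomenon. I would establish the required structural lemma by taking a shortest non-simple $\theta$-alternating walk $W$ from the free node $f$, isolating the closed sub-walk around its first repeated vertex $w$, and arguing that this sub-walk must have odd length (an even one could be excised to give a strictly shorter walk of the same parity). Using that $w$ carries at most one matching edge, I would show the alternation constraints at the two visits of $w$ force a rerouting that reaches $(v,\theta)$ by a shorter walk, contradicting minimality. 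Turning this into an airtight distributed statement---rather than contracting blossoms explicitly as centralized algorithms do---is where I expect the genuine work to lie, and it is also where the randomized, high-probability character of the construction enters: several equal-length wavefronts may contend for the same edge, and I would control the resulting congestion with random tie-breaking and random-delay scheduling so that every message stays within $O(\log n)$ bits while correctness fails with probability at most $1/n^c$.

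For $\PART(M,\ell)$ I would launch the wavefront of part~1 \emph{simultaneously} from all unmatched nodes, tagging each source with an independent random priority. Each node records the nearest free node reaching it together with the parity, so that an augmenting path of length at most $\ell$ surfaces as two free nodes whose wavefronts meet within total distance $\ell$. I would then pair the free endpoints of such short augmenting paths greedily by priority and grow a part $V^i$ around each surviving pair by assigning every node to the pair whose meeting wavefront first claims it; the $O(\ell)$ radius bounds $D(G^i)=O(\ell)$, and the pairing yields exactly two free nodes per part together with an internal augmenting path.

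The delicate point for $\PART$ is enforcing the invariant ``exactly two unmatched nodes, at least one internal augmenting path of length $\le\ell$'' through purely local decisions, since a free node may be claimed by several competing pairs and two augmenting paths may interleave. I would resolve these conflicts with the random priorities, imposing a consistent global order under which each free node commits to a single partner, then discard any node whose assignment would break the invariant and fold it into a neighbouring part; a union bound over the $O(\ell)$ wavefront rounds gives the stated $1-1/n^{c}$ success probability. Both algorithms terminate in $O(\ell)$ rounds because every phase---the wavefronts, the pairing, and the cluster growth---propagates over at most $O(\ell)$ hops of $G$.
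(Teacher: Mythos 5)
This statement is not proven in the paper at all: it is an imported black-box result, quoted from Ahmadi et al.~\cite{AK20}, and the authors explicitly say they only repackage that single algorithm into the two interfaces $\MV$ and $\PART$. So the relevant question is whether your from-scratch sketch would actually establish the theorem, and it would not: the core step fails. Your construction computes BFS layers in the parity-doubled graph, which gives the length of the shortest $\theta$-alternating \emph{walk} to $v$, and your ``structural lemma'' asserts that a shortest such walk can always be rerouted into a simple path of the same length and parity. That assertion is false, and the blossom is exactly the counterexample rather than an obstacle you can argue around. Take a free node $f$, a stem $f\!-\!s$ (unmatched), $s\!-\!v_1$ (matched), and an odd cycle $v_1v_2v_3v_4v_5v_1$ with matched edges $v_2v_3$ and $v_4v_5$. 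Then $v_1$ has no simple odd alternating path from $f$ at all, i.e.\ $r^{1}(f,v_1)=\infty$, yet the doubled-graph BFS reaches $(v_1,1)$ via the non-simple walk $f,s,v_1,v_2,v_3,v_4,v_5,v_1$ of length $7$. Your excision argument handles only even closed sub-walks; for the odd closed sub-walk around the repeated vertex there is no ``forced rerouting to a shorter walk,'' because no simple path of that parity exists and minimality is never contradicted. So the algorithm as you describe it would report false reachability and wrong distances, which is precisely the gap that the actual verification algorithm of~\cite{AK20} closes with genuine blossom machinery (an even/odd-level computation in the spirit of Micali--Vazirani that detects blossom-forming bridges and propagates odd levels through blossom bases). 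Deferring this to ``where I expect the genuine work to lie'' leaves the entire theorem unproven, since everything else in your sketch, including the clustering for $\PART$, presupposes that the reported $r^{\theta}$ values are correct.

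A secondary, smaller issue: your $\PART$ construction also leans on the same unproven primitive (two wavefronts ``meeting within total distance $\ell$'' only certifies a short alternating walk between two free nodes, not a simple augmenting path), so the invariant ``each part contains an augmenting path of length at most $\ell$'' inherits the same hole. If you want to prove this theorem yourself rather than cite it, the blossom-handling layer is the part you must actually supply; the doubled-graph BFS, the pipelining, and the random tie-breaking are the easy ninety percent.
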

While the original paper~\cite{AK20} presents a single algorithm returning the outputs of both $\MV$ and $\PART$, we intentionally separate it into two algorithms with different roles for clarity. Note that our matching-construction algorithm uses random bits only in the runs of these algorithms. As our algorithm activates them in $O(\mathrm{poly}(n))$ time as subroutines, we can guarantee that
our algorithm has a high probability of success by taking a sufficiently large $c$. Hence, we do not pay much 
attention to the failure probability of our algorithm. Any stochastic
statement in the following argument also holds with probability $1 - n^{c}$
for an arbitrary constant $c > 1$.
\section{Computing the Maximum Matching in CONGEST}
As explained in the introduction, the maximum matching problem is reducible to the problem of finding an augmenting path.
We first present two key results below.
\begin{lemma}
\label{lma:square}
Let $M$ be a matching of $G$. Provided that
$(G, M)$ has exactly two unmatched nodes $f, g \in V_G$ and contains 
an augmenting path of length at most $\ell$ between $f$ and $g$, 
there exists an $O(\ell^2)$-round randomized algorithm that outputs 
an augmenting path connecting $f$ and $g$.
\end{lemma}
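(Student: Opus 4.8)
The plan is to reconstruct a shortest augmenting path between $f$ and $g$ backward from $g$, peeling off one matched pair per iteration and invoking $\MV$ exactly once in each iteration. Throughout, I keep the source fixed at $f$ and maintain a shrinking graph $G_k$ (initially $G_0=G$) together with the induced matching $M_k=M\cap E(G_k)$, a current target $t_k$ (initially $t_0=g$), and the value $L_k=r^{1}_{G_k}(f,t_k)$, the length of the shortest augmenting path from $f$ to $t_k$ in $G_k$. The invariant I aim to preserve is that $(G_k,M_k)$ has exactly two unmatched nodes, $f$ and $t_k$, and that it still admits an augmenting path of length $L_k=L-2k\le\ell$ between them, where $L$ is the length of the original shortest augmenting path.

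In iteration $k$, node $f$ initiates $\MV(M_k,\ell,f)$ on $G_k$, so that every node $v$ learns $r^{0}_{G_k}(f,v)$ and $r^{1}_{G_k}(f,v)$ whenever these are at most $\ell$. If $L_k=1$, then $t_k$ is a non-matching neighbour of $f$, and the edge $(f,t_k)$ completes the path. Otherwise $t_k$ collects $r^{0}_{G_k}(f,\cdot)$ from its neighbours and selects any neighbour $u$ with $r^{0}_{G_k}(f,u)=L_k-1$. Such a $u$ is necessarily matched (since $f$ and $t_k$ are the only unmatched nodes and $L_k-1\ge 2$ forces $u\notin\{f,t_k\}$), and the edge $(u,t_k)$ is non-matching because $t_k$ is unmatched. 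I then commit the edges $(u,t_k)$ and $(u,\mathsf{m}(u))$ to the output path, set $t_{k+1}=\mathsf{m}(u)$, and delete $t_k$ and $u$, giving $G_{k+1}=G_k\setminus\{t_k,u\}$, where $\mathsf{m}(u)$ denotes the matching partner of $u$.

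The correctness of this single-shot choice is the crux, and it rests on two observations that together neutralise the usual blossom obstruction (the fact that shortest alternating walks need not be simple). First, a shortest $0$-alternating $f$-to-$u$ path has even length and therefore ends with a matching edge, so its penultimate vertex is exactly $\mathsf{m}(u)$; moreover an alternating path can never pass through an unmatched vertex as an internal node, since both incident edges would then be non-matching and violate alternation, so this path avoids $t_k$ entirely. Hence, after deleting $t_k$ and $u$, the vertex $\mathsf{m}(u)$ becomes unmatched in $M_{k+1}$ and still reaches $f$ by an alternating path of length $L_k-2$ inside $G_{k+1}$. Second, this length is optimal: a shorter augmenting $f$-to-$\mathsf{m}(u)$ path in $G_{k+1}$ could be extended by the matching edge $(\mathsf{m}(u),u)$ and the non-matching edge $(u,t_k)$ into an augmenting $f$-to-$t_k$ path of length below $L_k$ in $G_k$, contradicting the minimality of $L_k$. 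Consequently the invariant is restored with $L_{k+1}=L_k-2$, and the existence of a valid $u$ in every iteration follows by applying these facts to the penultimate vertex of any genuine shortest augmenting path in $G_k$.

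Finally I would bound the round complexity. The hypothesis $L\le\ell$ forces at most $\lceil L/2\rceil=O(\ell)$ iterations, each consisting of one invocation of $\MV(M_k,\ell,f)$ running in $O(\ell)$ rounds, plus $O(\ell)$ rounds of local coordination to announce the selected $u$, update the target, and mark the deleted vertices---all of which involves only nodes within alternating distance $O(\ell)$ of $f$. This yields the claimed $O(\ell^2)$ bound. I expect the main difficulty to lie precisely in the correctness argument for the predecessor selection: guaranteeing that deleting the committed prefix keeps the recomputed shortest alternating path both simple and of the exactly-decremented length, which is what allows the algorithm to commit greedily without ever backtracking.
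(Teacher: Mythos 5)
Your proposal is correct and follows essentially the same strategy as the paper's proof: reconstruct the shortest augmenting path backward from $g$, re-invoking $\MV$ from $f$ on the graph with the already-committed suffix deleted so that the candidate predecessor's shortest alternating path cannot intersect the built portion, and selecting any neighbour whose distance label equals the decremented length. The only immaterial differences are that you commit a matched pair of edges per iteration where the paper commits one edge at a time (handling the matching edge as a trivial step), and that you keep the current target inside the graph and exclude it from internal positions via the unmatched-vertex argument, whereas the paper simply deletes it.
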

\begin{lemma}
\label{lma:n}
Let $M$ be a matching of $G$. Provided that
$(G, M)$ has exactly two unmatched nodes $f, g \in V_G$ and contains 
an augmenting path between $f$ and $g$, 
 there exists an $O(n)$-round randomized algorithm that outputs an augmenting path that includes $f$.
\end{lemma}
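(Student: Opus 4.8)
The plan is to reduce the distributed search for an augmenting path to a centralized computation on a \emph{sparse} subgraph that a single node can afford to collect. Two features of the setting make this the natural route. First, since $(G,M)$ has exactly two unmatched nodes, every matched vertex is covered by $M$, so $|M| = (n-2)/2$ and hence $n = O(s_{\max})$; moreover any augmenting path must connect two unmatched nodes, and there are only $f$ and $g$, so it suffices to output any augmenting path from $f$ to $g$. Second, the $O(\ell^2)$-round procedure of Lemma~\ref{lma:square} degrades to $O(n^2)$ when $\ell = \Theta(n)$, and a dense input ($m = \Omega(n^2)$) forbids gathering all of $G$ at one node, so a fundamentally sparser object is needed.

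First I would invoke the verification algorithm $\MV(M, n, f)$ of Theorem~\ref{theo:verification}, which after $O(n)$ rounds labels every node $v$ with the lengths $r^0(f,v)$ and $r^1(f,v)$ of the shortest alternating paths of each parity from $f$ (taking $\ell = n$ covers all relevant lengths, since a simple path has length at most $n-1$). From these labels I would build a \emph{sparse certificate} $H \subseteq G$: a subgraph consisting of all matching edges together with, for each node $v$ and each parity $\theta$ with $r^\theta(f,v) < \infty$, one incident non-matching edge $(u,v)$ realizing the last step of some shortest $\theta$-alternating path to $v$ (the matching predecessor steps being already present in $M$). Each such edge is recognizable from the one-step-shorter distance relation of the complementary parity, which $u$ can report to $v$ in $O(1)$ rounds; since each node selects $O(1)$ edges, $|E(H)| = O(n)$.

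The key claim to establish is that this is a genuine certificate: $(H,M)$ admits an augmenting path between $f$ and $g$ whenever $(G,M)$ does. This is the crux of the proof and is exactly where odd cycles bite. Backtracking predecessor edges from $g$ always reconstructs, inside $H$, a shortest alternating \emph{walk} to $f$—each step strictly decreases the distance, so the walk terminates at $f$—but a blossom may force this walk to revisit vertices, so I cannot simply read off a simple path. The resolution is the new characterization of sparse certificates: I would argue that the selected edges retain enough of the alternating structure to extract a \emph{simple} augmenting path, charging each repeated vertex to an odd cycle and short-cutting through it while preserving the alternation pattern and the unmatched endpoints. Proving that this extraction never fails—i.e. that $H$ preserves at least one true augmenting path rather than only a non-simple walk—is where the combinatorial characterization does the real work, and I expect it to be the sole nonroutine step.

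Finally, with a correct certificate in hand, I would build a BFS tree rooted at $f$ in $O(D(G)) = O(n)$ rounds and pipeline the $O(n)$ edge descriptions of $H$ up to $f$; a standard pipelined upcast delivers them all in $O(|E(H)| + D(G)) = O(n)$ rounds. The node $f$ then solves the problem centrally, e.g.\ by Edmonds' blossom algorithm, to extract an augmenting path to $g$ inside $H$, which exists by the certificate property. Broadcasting the resulting sequence of node IDs down the BFS tree in $O(n)$ rounds lets every vertex learn whether and where it lies on the path. The total round complexity is $O(n)$, establishing the lemma.
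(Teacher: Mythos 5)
Your overall skeleton matches the paper's: preprocess with $\MV(M,n,f)$, build an $O(n)$-edge sparse certificate $H$ with $H \succ G$, upcast $H$ to $f$ over a BFS tree in $O(n)$ rounds, solve centrally, and broadcast the path back. The gap is that the step you yourself flag as ``the sole nonroutine step'' --- that your particular $H$ really is a certificate --- is left unproved, and your construction is not adequate as stated. Keeping, for each node and each finite parity, one predecessor edge of a shortest alternating path (plus all matching edges) is essentially a variant of the paper's \emph{alternating base tree}, and the paper explicitly observes that this kind of predecessor-only subgraph is \emph{not} a certificate: in Figure~\ref{fig:abt} the node $v_9$ is reachable in $G$ but has no alternating path from $f$ in either alternating base tree. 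The failure mode is exactly the blossom interaction you mention: the walk obtained by backtracking predecessors can be forced to traverse a matched edge in both directions, and no amount of short-cutting through odd cycles recovers a simple alternating path, because the edges needed to go around the \emph{other} side of the blossom were never placed into $H$. There is also a local identification problem you gloss over: the relation $r^{1-\theta}(f,u)=r^{\theta}(f,v)-1$ with the right matching status is necessary but not sufficient for $(u,v)$ to be the last edge of a genuine shortest $\theta$-alternating path to $v$, since the witness path to $u$ may itself pass through $v$; this is precisely why the $O(\ell^2)$ algorithm of Lemma~\ref{lma:square} must rerun $\MV$ on vertex-deleted subgraphs.

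The paper's resolution is the content of Lemma~\ref{lma:3}: starting from the alternating base tree $T$, it stratifies the non-tree edges into levels $F_k$ by alternating distance, tracks which tree edges cease to be bridges at each level $k$, and adds for each such tree edge one non-tree level-$k$ edge whose fundamental cycle covers it. Proving that the resulting $T+\cup_k F^c_k$ preserves reachability requires the two-level induction of Lemmas~\ref{lma:1} and~\ref{lma:3} (outer induction on the level $k$, inner induction on the alternating distance $h$, with a case analysis on whether the parent edge is a bridge), and the $O(n)$ size bound comes from charging the added edges to the pairwise-disjoint sets $B_{k-1}\setminus B_k$. This is the real technical content of the lemma, and it is exactly the part your proposal delegates to ``the combinatorial characterization'' without supplying it. Everything before and after that point --- the reduction to the two-unmatched-node case, the $O(n)$ upcast, and the centralized extraction at $f$ --- is correct and matches the paper.
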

The outputs of both algorithms are the labels to the edges in the computed 
augmented path. To prove the lemmas, one can utilize the output of the algorithm $\PART$. We first run the verification algorithm 
$\PART(M, \ell)$ (for Lemma~\ref{lma:square}) or $\PART(M, \hat{s})$ (for Lemma~\ref{lma:n}) as a preprocessing step and then execute the algorithms 
of Lemma~\ref{lma:square} or \ref{lma:n} for each $G^{i}$ output by 
$\PART$ independently. Note that each $G^{i}$ contains only matched nodes 
and two unmatched nodes; thus, $|V(G^{i})| \leq 2|M|+2$ holds for any $G^i$. Then, the following corollary is deduced:
\begin{corollary}
There exist two randomized algorithms $\SAPath(M, \ell)$ and $\LAPath(M)$ satisfying the following conditions, respectively:
\begin{itemize}
    \item For any graph $G = (V, E)$ and matching $M \subseteq E$, 
    $\SAPath(M, \ell)$ finds a nonempty set of vertex-disjoint augmenting paths within $O(\ell^2)$ rounds if $(G, M)$ has an augmenting path of length at most $\ell$.
    \item For any graph $G = (V, E)$ and matching $M \subseteq E$, $\LAPath(M)$ finds
    a nonempty set of vertex-disjoint augmenting paths of $(G, M)$ 
    within $O(|M|)$ rounds if $(G, M)$ has an augmenting path.
\end{itemize}
\end{corollary}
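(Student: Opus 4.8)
The plan is to obtain both $\SAPath$ and $\LAPath$ by the same two-stage reduction: first decompose the graph with the partition routine $\PART$ of Theorem~\ref{theo:verification}, and then run the appropriate single-path subroutine (Lemma~\ref{lma:square} for $\SAPath$, Lemma~\ref{lma:n} for $\LAPath$) independently and simultaneously inside each part. The reason this requires no extra machinery is that $\PART$ returns a partition $V^1,\dots,V^N$ of $V(G)$ whose induced subgraphs $G^1,\dots,G^N$ are pairwise vertex-disjoint, each containing exactly two unmatched nodes $f^i,g^i$ together with an augmenting path between them. Consequently every node belongs to exactly one part, each subroutine communicates only over edges internal to its own $G^i$, and the augmenting paths produced in distinct parts are automatically vertex-disjoint. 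Thus the only things left to verify are that the hypotheses of the single-path lemmas hold for each $(G^i, M\cap E(G^i))$, that the output set is nonempty whenever $(G,M)$ admits an augmenting path, and the round complexity.

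For $\SAPath(M,\ell)$ I would run $\PART(M,\ell)$, costing $O(\ell)$ rounds, and then run the algorithm of Lemma~\ref{lma:square} in each $G^i$. By Theorem~\ref{theo:verification} each $G^i$ has exactly two unmatched nodes and an augmenting path of length at most $\ell$ between them, with diameter $O(\ell)$, which is exactly the hypothesis of Lemma~\ref{lma:square}; hence the subroutine terminates in $O(\ell^2)$ rounds on every part. If $(G,M)$ has an augmenting path of length at most $\ell$, then $\PART(M,\ell)$ outputs at least one part and the resulting set is nonempty. Summing the two stages gives $O(\ell+\ell^2)=O(\ell^2)$ rounds, as claimed.

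For $\LAPath(M)$ the structure is identical, replacing Lemma~\ref{lma:square} by Lemma~\ref{lma:n} and running $\PART$ with a length parameter $\Theta(\hat{s})$. The key quantitative gain is that, although Lemma~\ref{lma:n} guarantees only $O(n)$ rounds on an arbitrary graph, when it is executed on a single part its round complexity becomes $O(|V(G^i)|)$, and every part satisfies $|V(G^i)|\le 2|M|+2$ because $G^i$ contains only matched nodes besides $f^i$ and $g^i$. Hence the per-part work is $O(|M|)$, and since each $G^i$ contains an augmenting path between its two unmatched nodes, the hypothesis of Lemma~\ref{lma:n} is met and the union over parts is a nonempty set of vertex-disjoint augmenting paths.

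The step I expect to require the most care is the choice of the $\PART$ length parameter for $\LAPath$, together with the exact round accounting. For nonemptiness I would argue that every augmenting path of $(G,M)$ contains at most $|M|$ matching edges and therefore has length at most $2|M|+1$; since $|M|\le s_{\max}\le\hat{s}$, the value $2\hat{s}$ upper-bounds the length of every augmenting path, so running $\PART$ with parameter $\Theta(\hat{s})$ preserves at least one of them (Proposition~\ref{pro:hk} could also be invoked to control the shortest such path, which for small $k=s_{\max}-|M|$ may be as long as $2s_{\max}-1$ and hence is not captured by the parameter $\hat{s}$ alone). This makes the $\PART$ stage cost $O(\hat{s})=O(s_{\max})$ rounds, which dominates the $O(|M|)$ per-part work; the advertised $O(|M|)$ bound is then obtained in the regime in which $\LAPath$ is actually deployed, where the absence of short augmenting paths forces $|M|=\Theta(s_{\max})=\Theta(\hat{s})$. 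A secondary point to check is that $M\cap E(G^i)$ is genuinely a matching of $G^i$ with precisely two unmatched nodes, i.e.\ that no matching edge crosses a part boundary; this is exactly the property encoded in clause (a) of Theorem~\ref{theo:verification}, and it guarantees that the parallel executions on the vertex-disjoint induced subgraphs neither interfere nor alter the matched/unmatched status of any node.
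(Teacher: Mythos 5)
Your proposal matches the paper's own argument essentially verbatim: the paper likewise runs $\PART(M,\ell)$ (resp.\ $\PART(M,\hat{s})$) as a preprocessing step, executes the algorithm of Lemma~\ref{lma:square} (resp.\ Lemma~\ref{lma:n}) independently in each part $G^i$, and uses the observation $|V(G^i)|\leq 2|M|+2$ to turn the $O(n)$ bound into $O(|M|)$. Your extra care about choosing the $\PART$ length parameter large enough (a constant multiple of $\hat{s}$, since an augmenting path may have length up to $2|M|+1$) is a detail the paper glosses over, but it does not change the approach.
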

We present an $O\left(s_{\max}^{3/2}+\log n\right)$-round algorithm for computing the maximum matching
using the algorithms $\SAPath(M, \ell)$ and $\LAPath(M)$.
The pseudocode of the whole algorithm 
is presented in Algorithm~\ref{alg:mm}. 
It basically follows the standard idea of centralized maximum matching algorithms, i.e., finding an augmenting path and improving the current matching iteratively. 
The first $\ests-\sqrt{\ests}$ iterations use $\SAPath(M,\ell)$ (lines 1--4), and the remaining $\sqrt{\hat{s}}$ iterations use $\LAPath(M)$.
In the $i$-th iteration, the algorithm $\SAPath(M, \ell)$ runs with $\ell=\lceil2\ests/(2\ests-i)\rceil$. This setting comes from Proposition~\ref{pro:hk}.
The improvement of the current matching by a given augmenting path is simply a local operation and is realized by flipping the labels of matching edges and 
non-matching edges on the path. The correctness and running time of Algorithm~\ref{alg:mm} are analyzed below.
\begin{algorithm*}[tb]
\caption{Constructing a maximum matching in $O(n^{3/2})$ rounds.}
\label{alg:mm}
\begin{algorithmic}[1]
\FOR{$i=1;i\leq \ests-\sqrt{\ests};i++$}
\STATE run the algorithm $\SAPath(M,\ell)$ with $\ell = \lceil2\ests/(\ests-i)\rceil$ for $O(\ell)$ rounds.
\IF{$\SAPath(M,\ell)$ finds a nonempty set of vertex-disjoint augmenting paths within $O(\ell)$ rounds,}
\STATE improve the current matching using the set of vertex-disjoint augmenting paths.
\ENDIF
\ENDFOR
\FOR{$i=1;i\leq \sqrt{\ests};i++$}
\STATE run the algorithm $\LAPath(M)$ for $O(\ests)$ rounds.
\IF{$\LAPath(M)$ finds a nonempty set of vertex-disjoint augmenting paths within $O(\ests)$ rounds,}
\STATE improve the current matching $M$ using the set of vertex-disjoint augmenting paths.
\ENDIF
\ENDFOR
\end{algorithmic}
\end{algorithm*}
\begin{lemma}
Algorithm~\ref{alg:mm} constructs a maximum matching with high probability in $O\left(s_{\max}^{3/2}+\log n\right)$ rounds.
\end{lemma}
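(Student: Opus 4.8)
The plan is to split the analysis into a correctness part and a running-time part, viewing the two \texttt{for}-loops as two phases, and to track the \emph{deficiency} $k = s_{\max}-|M|$ of the current matching. I would first record the routine facts that flipping labels along a set of vertex-disjoint augmenting paths keeps $M$ a valid matching and decreases $k$ by the number of paths (hence by at least one on every iteration that returns a nonempty set), that $k$ never increases, and that once $k=0$ the matching is maximum and is never touched again. With these in hand the correctness claim reduces to two sub-goals: (i) phase~1 brings the deficiency down to $k\le\sqrt{\ests}$, and (ii) the $\sqrt{\ests}$ iterations of phase~2 remove whatever deficiency remains.

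The main obstacle is sub-goal~(i): because $\ests=2|M^{\ast}|$ only satisfies $s_{\max}\le\ests\le 2s_{\max}$, the loop index $i$ cannot be equated with $|M|$, so the naive invariant ``$|M|\ge i$ after iteration $i$'' is false. I would instead argue from the contrapositive of Proposition~\ref{pro:hk}. Suppose that at iteration $i$ the call $\SAPath(M,\ell)$ with $\ell=\lceil 2\ests/(\ests-i)\rceil$ returns the empty set while $M$ is not maximum; then $(G,M)$ has no augmenting path of length at most $\ell$, so its shortest augmenting path exceeds $\ell\ge 2\ests/(\ests-i)\ge 2s_{\max}/(\ests-i)$, and Proposition~\ref{pro:hk} (which bounds the shortest augmenting path by $2s_{\max}/k$) then forces $k<\ests-i$. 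Let $i_0$ be the last phase-1 iteration that does not return a path; at $i_0$ either $M$ is already maximum ($k=0$) or the inequality above gives $k\le\ests-i_0-1$, and every subsequent iteration is a success lowering $k$ by at least one. Hence after the final phase-1 iteration $i=\ests-\sqrt{\ests}$ we get $k\le(\ests-i_0-1)-\big((\ests-\sqrt{\ests})-i_0\big)=\sqrt{\ests}-1$; and if no non-returning iteration occurs at all, then all $\ests-\sqrt{\ests}$ iterations succeed and $k\le s_{\max}-(\ests-\sqrt{\ests})\le\sqrt{\ests}$ since $s_{\max}\le\ests$. Either way $k\le\sqrt{\ests}$ entering phase~2, where each call $\LAPath(M)$ returns a nonempty set whenever an augmenting path exists; so at most $\sqrt{\ests}$ successful augmentations drive $k$ to $0$, proving (ii) and the correctness of the output.

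For the running time I would bound the two phases separately. Phase~1 costs $\sum_{i=1}^{\ests-\sqrt{\ests}}O(\ell^2)$ with $\ell=\lceil 2\ests/(\ests-i)\rceil$ (the $O(\ell^2)$ guarantee of $\SAPath$); substituting $j=\ests-i$ turns this into $O\big(\ests^2\sum_{j\ge\sqrt{\ests}}j^{-2}\big)=O(\ests^2/\sqrt{\ests})=O(\ests^{3/2})$ via $\sum_{j\ge m}j^{-2}=O(1/m)$. Phase~2 makes $\sqrt{\ests}$ calls to $\LAPath(M)$, each costing $O(|M|)=O(\ests)$ rounds, again $O(\ests^{3/2})$ in total. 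Adding the $O(D(G)+\log n)=O(s_{\max}+\log n)$ preprocessing (the maximal-matching step and the propagation of $2|M^{\ast}|$) and using $\ests=O(s_{\max})$ gives the claimed $O(s_{\max}^{3/2}+\log n)$ bound. Finally, since randomness is confined to the $\mathrm{poly}(n)$ invocations of $\MV$ and $\PART$ inside $\SAPath$ and $\LAPath$, each correct with probability $1-1/n^{c}$, a union bound with a large enough constant $c$ makes the whole algorithm correct with high probability; this last step is routine.
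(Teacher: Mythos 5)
Your proof is correct and follows essentially the same approach as the paper: both arguments rest on Proposition~\ref{pro:hk} to show that the schedule $\ell=\lceil 2\ests/(\ests-i)\rceil$ drives the deficiency down to $O(\sqrt{\ests})$ by the end of phase~1, and the running-time summation is identical. The only difference is bookkeeping --- you argue via the contrapositive and a ``last failing iteration,'' while the paper runs a forward induction on the matching size $s(\ests-s_{\max}+j)\geq j$ --- and your version handles the slack between $\ests$ and $s_{\max}$ somewhat more transparently.
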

\begin{proof}
Let $s(i)$ be the matching size at the end of $i$ iterations of the algorithm $\SAPath(M,\ell)$.
We show that $s(\ests-s_{\max}+j)\geq j$ holds for any $0 \leq j \leq s_{\max}-\sqrt{\ests}$.
It implies that the matching size is at least $s_{\max}-\sqrt{\hat{s}}$ after the application of $\SAPath(M,\cdot)$. 
Therefore, the maximum matching is constructed by $\sqrt{\ests}$ iterations of the algorithm $\LAPath(M)$.
The proof of the statement above follows the induction on $j$.
(Basis) If $j = 0$, the statement trivially holds.
(Inductive step) As the induction hypothesis, suppose $s(\ests-s_{\max}+j')\geq j'$ holds.
If $s(\ests-s_{\max}+j')\geq j'+1$, then the statement holds.
Therefore, we consider the case in which $s(\ests-s+j')=j'$ holds.
By Proposition~\ref{pro:hk}, there exists an augmenting path of length at most $\lfloor2s_{\max}/(s_{\max}-j')\rfloor\leq 2\ests/(s_{\max}-j')= 2\ests/(\ests-(\ests-s_{\max}+j'))\leq 2\ests/(\ests-(\ests-s_{\max}+(j'+1)))$ at the end of  $\ests-s_{\max}+j'$ iterations of the algorithm $\SAPath(M,\ell)$.
Hence, the size of the matching is increased by at least one in the ($\ests-s_{\max}+j'+1$)-th iteration.

Now, we show the running-time analysis of Algorithm~\ref{alg:mm}.
As $\SAPath(M,\ell)$ is repeated $\ests-\sqrt{\ests}$ times and $\LAPath(M)$ is repeated $\sqrt{\ests}$ times, the running time of Algorithm~\ref{alg:mm} is as follows.
\begin{align*}
&O\left(s_{\max}+\log n\right)+ O\left(\sum_{i=1}^{\ests-\sqrt{\ests}}\left(\left\lceil \frac{2\ests}{\ests-i} \right\rceil\right)^2\right)+O\left(\ests\sqrt{\ests}\right)\\
=&O\left(\sum_{i=1}^{\ests-\sqrt{\ests}}\left( \frac{\ests}{\ests-i} \right)^2+\ests-\sqrt{\ests}+\ests\sqrt{\ests} +\log n\right)\\
=&O\left(\sum_{i=\sqrt{\ests}}^{\ests-1}\left( \frac{\ests}{i} \right)^2+\ests\sqrt{\ests} +\log n\right)\\
=&O\left(\ests^{2}\sum_{i=\sqrt{\ests}}^{\ests-1}\left( \frac{1}{i} \right)^2+\ests\sqrt{\ests} +\log n\right)\\
=&O\left(\ests^{2}\frac{1}{\sqrt{\ests}}+\ests\sqrt{\ests} +\log n\right)\\
= &O\left(\ests^{3/2} + \log n\right)\\
= &O\left(s_{\max}^{3/2} + \log n\right).\\
\end{align*}
\end{proof}
The following sections are devoted to proving Lemmas~\ref{lma:square} and \ref{lma:n}. Since the presented algorithms are intended to run in each $G^i$ 
returned by the preprocessing run of $\PART(M, \cdot)$, without loss of generality, we assume 
that $G$ has exactly two unmatched nodes $f$ and $g$ with an augmenting path between them.
In addition, it is assumed that one of $f$ and $g$ is 
elected as a primary unmatched node (referred to as $f$ hereafter). This election process is easily implemented in $O(\ell)$ rounds because the distance between $f$ and $g$ is at most $\ell$.
When we argue the existence of augmenting or alternating paths in a subgraph $H = (V(H), E(H))$ of $G$, the matching $M \cap E(H)$ of graph $H$ is considered without explicit notice.
Given a subgraph $H \subseteq G$, we denote the length of the shortest odd (even) alternating path from $f$ to $v$ in $H$ by $r^{1}_{H}(f,v)$ ($r^{0}_{H}(f,v)$). If no odd or even alternating path exists from $f$ to $v$ in $H$, then we define $r^{1}_{H}(f,v) = \infty$ or $r^{0}_{H}(f,v) = \infty$.
As sentinels, we also define $r^{0}_{H}(f,f)$ as $\infty$ and $r^{1}_{H}(f,f)$ as $0$.
\section{Construction of Augmenting Path in $O(\ell^2)$ Rounds}
\label{sec:square}

\subsection{Outline}
Let $P = v_0, v_1, \dots, v_\ell$ be the shortest augmenting path 
from $f$ to $g$ (i.e., $f = v_0$ and $g = v_\ell$) and $P_{i}=P^{s}_{v_i}$ for short. The key idea of the algorithm is to find the predecessor of each node $v_i$ along $P$ sequentially. Note that it does not suffice to choose
a neighbor $v$ of $v_i$ with $r^{\theta}_G(f,v) = i - 1$ and $\Indi_{M}(v_{i},v)=\theta$ for $\theta = (i - 1) \mod 2$ as the predecessor. This strategy is problematic in the scenario in which there exists two neighbors $v$ and 
$u$ such that $r^{\theta}_G(f,v) = r^{\theta}_G(f,u) = i - 1$ and $\Indi_{M}(v_{i},u)=\Indi_{M}(v_{i},v)=\theta$ for $\theta = (i - 1) \mod 2$, where $u$ is the correct successor. While $v$ is guaranteed to have the alternating 
path $Q$ from $f$ to $v$ of length $i - 1$, it can intersect $P_i$. Then, the 
concatenation $Q \circ (v_{i},v) \circ P_i$ is not simple. That is, it is not an augmenting path.
To avoid this scenario, the algorithm finds the predecessor of $v_i$ in the graph 
$G - P_{i}$, where $G - P_{i}$ is the induced graph by $V(G)\backslash V(P_i)$. If some neighbor $v$ of $v_i$ satisfies 
$r^{\theta}_{G - P_{i}}(f,v) = i - 1$ and $\Indi_{M}(v_{i},v)=1-\theta$, the concatenated walk $Q \circ (v_{i},v) \circ P_i$ is guaranteed
to be simple.

\begin{algorithm*}[tb]
\caption{Construction of the augmenting path $CAP((G,M),f,g,\ell)$ for node $v_{i}$.}
\label{alg:cap}
\begin{algorithmic}[1]
\REQUIRE The path $P_{0}$ is an augmenting path with length $\ell$ from $f$ to $g$.
\STATE $P_{0},P_{1},\dots,P_{\ell}$: initially $\emptyset$.
\STATE $\mathsf{target}= g$
\FOR{$i=1;i\leq \ell; i++$}
	\IF{$h$ is even}
	\STATE $\mathsf{target}$ chooses the node $v_{\ell - i}$ that satisfies $\Indi_{M}((\mathsf{target},v_{\ell -i }))= 1$.
		\STATE $P_{\ell-i}\leftarrow P_{\ell-i+1}\cup\{(\mathsf{target},v_{\ell- i })\}$.
		\STATE $\mathsf{target} \leftarrow v_{\ell- i}$.
	\ELSE
		\STATE run the algorithm $\MV(M,\ell-i,f)$ with the subgraph $H_{\ell-i+1}$ induced by $V(G - P_{\ell-i+1})$ as the input.
		\STATE  for any $v\in V(G - P_{\ell-i+1})$, the node $v$ sends $r^{0}_{H_{\ell-i+1}}(f,v)$ to its neighborhood.
		\STATE $\mathsf{target}$ chooses the node $v_{\ell-i}$ that satisfies $\Indi_{M}((\mathsf{target},v_{\ell-i}))= 0$ and $r^{0}_{H_{\ell-i+1}}(f,v_{\ell-i})=\ell-i$.
		\STATE $P_{\ell-i} \leftarrow P_{\ell-i+1}\cup \{(\mathsf{target},v_{\ell-i})\}$.
		\STATE $\mathsf{target} \leftarrow v_{\ell-i}$.
	\ENDIF	
\ENDFOR
\end{algorithmic}
\end{algorithm*}

\subsection{Algorithm Details}
Algorithm~\ref{alg:cap} details the algorithm for constructing the augmenting path in $O(\ell^2)$ rounds. The algorithm consists
of $\ell$ steps. In the $i$-th step, it finds the predecessor of $v_{\ell - i}$.
Assume that the algorithm has already found $P_{\ell - i + 1}$ at the beginning of 
the $i$-th step. Any node in $V(P_{\ell-i + 1}) \setminus \{v_{\ell -i +1}\}$ quits the algorithm (with 
the information of the predecessor in $P_i$), and thus, the nodes still running the algorithm are given by $V(G - P_{\ell-i+1})$. If $i$ is even, the edge $(v_{\ell-i}, v_{\ell-i + 1})$ is the 
matching edge, and thus, the algorithm determines the neighbor of $v_{\ell-i+1}$ connected by
the edge with $M$ as the predecessor. Otherwise, the nodes still participating in the algorithm
run $\MV(M, \ell-i+1,f)$ (that is, they run in the graph $G - P_{\ell - i+1}$) The algorithm decides an arbitrary neighbor $v$ of $v_i$ satisfying $r^{0}_{G - P_{\ell - i+1}}(f,v) = \ell - i - 1$ and $\Indi_{M}(v,v_{i})=0$ as the
predecessor of $v_{\ell-i}$. 

\begin{lemma}
\label{lma:square_jus}
Algorithm ~\ref{alg:cap} constructs an augmenting path between $f$ and $g$ with high probability in $O(\ell^2)$ rounds.
\end{lemma}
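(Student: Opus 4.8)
The plan is to prove correctness by induction on the step index $i$, maintaining the invariant that the partial path constructed so far is the suffix of \emph{some} shortest augmenting path. Write $u_i$ for the node held in $\mathsf{target}$ after step $i$ (so $u_0 = g$), and let $P_{\ell-i}$ denote the path built after step $i$. I would prove the invariant $\mathrm{I}_i$: the path $P_{\ell-i}$ is a simple alternating path of length $i$ from $u_i$ to $g$, and there exists a simple augmenting path $A_i$ from $f$ to $g$ of length exactly $\ell$ whose suffix starting at $u_i$ is $P_{\ell-i}$. Here $\ell = r^1_G(f,g)$ is taken to be the \emph{exact} length of a shortest augmenting path (computable by one preliminary call to $\MV(M,\ell,f)$), so that minimality is available throughout the argument. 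The base case $\mathrm{I}_0$ holds by taking $A_0$ to be the shortest augmenting path $P$ itself.

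For the inductive step I would split on the parity of $i$, which governs whether the edge entering $u_{i-1}$ in $A_{i-1}$ is matching. When $i$ is even this edge is a matching edge, so $u_{i-1}$ has a \emph{unique} matching partner $w$; the algorithm is forced to choose $w$, and taking $A_i = A_{i-1}$ immediately re-establishes the invariant. When $i$ is odd the entering edge is non-matching, and the algorithm instead searches, in the induced subgraph $H = G - P_{\ell-i+1}$ (written $H_{\ell-i+1}$ in the pseudocode), for a neighbor $v$ of $u_{i-1}$ with $\Indi_M(v,u_{i-1}) = 0$ and $r^0_H(f,v) = \ell - i$. Two things must then be shown. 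First, existence: the predecessor $w$ of $u_{i-1}$ in $A_{i-1}$ is such a neighbor, because the prefix of $A_{i-1}$ from $f$ to $w$ is an even alternating path of length $\ell-i$ lying entirely in $H$ (it is vertex-disjoint from the suffix $P_{\ell-i+1}$ by simplicity of $A_{i-1}$), giving $r^0_H(f,w) \le \ell-i$; equality follows from minimality of $\ell$, since a strictly shorter even alternating path in $H$ could be concatenated with $(w,u_{i-1})$ and $P_{\ell-i+1}$ into an augmenting path of length $< \ell$. Second, soundness of an arbitrary choice: for \emph{any} qualifying $v$, letting $B_v$ be a shortest even alternating path from $f$ to $v$ in $H$, the concatenation $B_v \circ (v,u_{i-1}) \circ P_{\ell-i+1}$ is a valid augmenting path of length $\ell$ whose suffix is $P_{\ell-i} = P_{\ell-i+1} \cup \{(u_{i-1},v)\}$, so $\mathrm{I}_i$ holds with this path as $A_i$.

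The key point behind both the existence and the soundness arguments — and the main obstacle — is \emph{simplicity of the concatenation}. This is exactly where running $\MV$ in $G - P_{\ell-i+1}$ rather than in $G$ is essential: because $B_v$ (and $v$ itself) avoids all vertices of $P_{\ell-i+1}$, the glued walk has no repeated vertex, so it is a genuine simple path and hence an augmenting path. I would also verify the parity bookkeeping at the splice (an even alternating path from the unmatched node $f$ terminates with a matching edge, which alternates with the non-matching edge $(v,u_{i-1})$, which in turn alternates with the matching first edge of $P_{\ell-i+1}$), confirming the result is alternating. Applying $\mathrm{I}_\ell$ finishes correctness: since $|P_0| = \ell = |A_\ell|$, the suffix $P_0$ must be all of $A_\ell$, forcing $u_\ell = f$, so $P_0$ is an augmenting path from $f$ to $g$; the terminal (odd) step indeed locates $f$ as the non-matching neighbor of $u_{\ell-1}$ carrying the length-zero alternating path.

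For the round complexity, the algorithm performs $\ell$ steps. Each odd step is dominated by one invocation of $\MV(M,\ell-i,f)$, costing $O(\ell-i) = O(\ell)$ rounds by Theorem~\ref{theo:verification}, together with $O(\ell)$ rounds to select the predecessor locally and coordinate the next step across the $O(\ell)$-diameter component; each even step needs only the analogous $O(\ell)$-round coordination. Summing over the $O(\ell)$ steps gives $O(\ell^2)$ rounds. Finally, the only randomness lies in the $O(\ell)$ calls to $\MV$, each of which succeeds with probability $1 - 1/n^{c}$ by Theorem~\ref{theo:verification}; a union bound over all calls yields overall success with high probability, completing the proof.
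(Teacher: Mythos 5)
Your proof is correct and follows essentially the same route as the paper's: a backward induction along the path, with the key point that running $\MV$ on the induced subgraph $G - P_{\ell-i+1}$ forces the spliced walk to be simple, hence a genuine augmenting path. Your invariant --- that the built suffix extends to a witness shortest augmenting path $A_i$ of length exactly $\ell$ --- is slightly stronger than the paper's and makes explicit both the existence of a qualifying predecessor at each odd step (which the paper's proof merely asserts) and the soundness of an arbitrary choice among qualifying neighbors.
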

\begin{proof}
Let $z_{0} = g$ and $z_{i}$ be the node that satisfies $\mathsf{target}=z_{i}$ at the end of the $i$-th iteration for $1\leq i \leq \ell$.
Let $H_{i}$ be a subgraph induced by $V(G-P_{i})$.
We prove the statement that $P_{\ell-h}$ is a $(h\mod 2)$-alternating path between $z_{0}$ and $z_{h}$. As $r^{0}_{G}(f,z_{\ell})\leq r^{0}_{H_{1}}(f,z_{\ell})=0$, $z_{\ell}=v$ holds, and thus, we obtain $P_{0}$ as an augmenting path of length $\ell$ from $f$ to $g$ by setting $h=\ell$. 
The proof follows the induction on $h$.
(Basis) Since $z_{0}$ chooses the node $z_{1}$ that satisfies $\Indi_{M}((\mathsf{target},v_{\ell-1}))= 0$ and $r^{0}_{H_{\ell}}(f,v_{\ell-1})=\ell-1$ in the first iteration of Algorithm~\ref{alg:cap}, $P_{\ell-1}=\{(z_{0},z_{1})\}$ is a $1$-alternating path between $z_{0}$ and $z_{1}$. 
(Inductive Step) As the induction hypothesis, suppose there exists a $(h' \mod 2)$-alternating path between $z_{0}$ and $z_{h'}$ at the end of the $h'$-th iteration. 
Because $r^{h' \mod 2}_{H_{\ell-h'+1}}(f,z_{h'}) = \ell-h'$ holds by the definition of $z_{h'}$, there exists an edge $(z_{h'},v)$ that satisfies $\Indi_{M}((z_{h'},v))= h' \mod 2$, and $r^{(h'+1) \mod 2}_{H_{\ell-h'}}(f,v)=\ell-h'-1$ holds.
Therefore, $z_{h'}$ can choose the node $z_{h'+1}$ that satisfies $\Indi_{M}((\mathsf{target},z_{h'+1}))= h' \mod2$ and $r^{0}_{H_{\ell-h'}}(f,z_{h'+1})=\ell-h'-1$ in the $(h'+1)$-th iteration of Algorithm~\ref{alg:cap}.
Hence, $P^{\ell-h'}\circ \{(z_{h'},z_{h'+1})\}$ is a $((h+1)\mod 2)$-alternating path between $z_{0}$ and $z_{h'+1}$ at the end of the $(h'+1)$-th iteration. 

We show the running-time analysis of Algorithm~\ref{alg:cap}.
The algorithm consists of $\ell$ iterations.
As each iteration is obliviously implemented in $O(\ell)$ rounds, the running time of Algorithm~\ref{alg:cap} is $O(\ell^{2})$ rounds.
\end{proof}
Theorem~\ref{lma:square} trivially follows from Lemma~\ref{lma:square_jus}.

\section{Construction of Augmenting Path in $O(n)$ Rounds}
\label{sec:n}
\subsection{Outline}
\label{sec:out_n}
We first introduce several auxiliary notions and definitions. 
Given a subgraph $H \subseteq G$ and $\theta \in \{0, 1\}$, a node $v \in V_H$ is called \emph{$\theta$-reachable} in $H$ if $r^\theta_H(f,v)$ is finite. In addition, $v$ is called \emph{bireachable} in $H$ if it is both 1-reachable and 0-reachable in $H$.
A node that is neither 1-reachable nor 0-reachable in $H$ is called \emph{unreachable} in $H$.
A node that is $\theta$-reachable for some $\theta \in \{0, 1\}$ in $H$ but not 
bireachable in $H$ is called \emph{strictly $\theta$-reachable} in $H$.
Given two spanning subgraphs $H_1$ and $H_2$ of $G$, 
we say that a node $v \in V(H_1)$ \emph{preserves the reachability} of $H_2$ in $H_1$ if for any $\theta \in \{0, 1\}$, 
the $\theta$-reachability of $v$ in $H_2$ implies that in $H_1$. A graph $H_1$ is said to preserve the reachability of $H_2$ if any node $v \in V(H_1)$ preserves the reachability of $H_2$ in $H_1$, which is denoted by $H_1 \succ H_2$. 
We define $r_{H}(f,v)=\min_{\theta \in \{0, 1\}} r^\theta_H(f,v)$ and $\gamma_H(v) = \mathrm{argmin}_{\theta \in \{0, 1\}} r^\theta_H(f,v)$. Note that $r^0_H(f,v) = r^1_H(f,v)$ does not hold, because $r^0_H(f,v)$ is even and $r^1_H(f,v)$ is odd.  
When $r^0_H(f,v) = \infty$ and  $r^1_H(f,v) = \infty$ hold, $\gamma_H(v)$ is defined as zero. 
We assume that any node $v$ unreachable from $f$ in $G$ does not join our algorithm.
Therefore, without loss of generality, we assume that none of the nodes $v \in V_G$ are unreachable in $G$ without loss of generality.
In addition, we assume that any node $v \in V_G$ has information on the values of $r^0_G(f,v)$ and $r^1_G(f,v)$ at the
beginning of the algorithm. This assumption is realized by activating $\MV(M, n, f)$ as a preprocessing step.

The key idea of our proof is to construct a \emph{sparse certificate} $H$, 
which is a spanning subgraph $H \subseteq G$ of $O(n)$ edges satisfying $H \succ G$.
If such a graph is obtained, the trivial centralized approach (i.e., the approach in which $f$ collects the whole topological information of $H$) yields an $O(n)$-round algorithm for constructing the augmenting path. For constructing sparse certificates, we first introduce a novel tree structure associated with $G$, $M$, and $f$:
\begin{definition}[Alternating base tree]
An \emph{alternating base tree} for $G$, $M$, and $f$ is the rooted spanning tree $T$ of $G$ satisfying the following conditions:
\begin{itemize}
\item $f$ is the root of $T$.
\item For any $v \in V(G)$, the edge from $v$ to its parent in $T$ is the last edge of the shortest alternating path from $f$ to $v$ in $G$. Formally, letting $\parent_T(v)$ be the parent of $v \in V(G) \setminus \{f\}$ in $T$, $r^{\gamma_G(v)}_G(f,v) = r^{1 - \gamma_G(v)}_G(f,\parent_{T}(v)) + 1$ and $\Indi_M((v, \parent_{T_I}(v))) = 1 - \gamma_G(v)$ hold for any $v \in V(G) \setminus \{f\}$.
\end{itemize}
\end{definition}
\begin{figure*}[tb]
\begin{center}
\includegraphics[width=13cm]{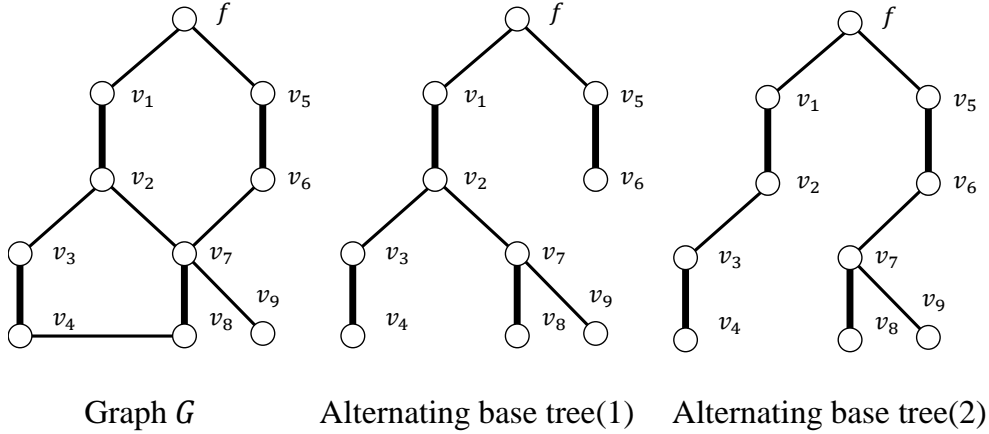}
\end{center}
\caption{Examples of the alternating base tree. Bold lines are matching edges, and thin lines are unmatched edges.}
 \label{fig:abt}
\end{figure*}
It is not difficult to check that such a spanning tree always exists. 
As a node might have two or more shortest alternating paths, $T$ is not uniquely determined (see 
Figure~\ref{fig:abt} (1) and (2) for examples). In the following argument, however, we fix an arbitrarily chosen alternating
base tree $T$. It should be emphasized that the alternating base tree does not necessarily 
contain an alternating path from $f$ to each node $v$. For example, both alternating base trees in 
Figure~\ref{fig:abt} have no alternating path from $f$ to $v_{9}$. 

Fixing $T$, the subscript $T$ of the notation $\parent_T(v)$ is omitted in the following argument. We define 
$\edgep(v)$ as the edge from $v$ to its parent and $T_v$ as the subtree of $T$ rooted by $v$. 
Any non-tree edge $e = (u, w) \in E(G) \setminus E(T)$ and the unique path from $u$ to $w$ in $T$ form a
simple cycle in $G$, which is denoted by $\cyclet(e)$. 

The sparse certificate is obtained by incrementally augmenting edges to $T$. For any $1 \leq k \leq n$, we 
define the \emph{level-$k$ edge set} $F_k$
as $F_{k}= \{(u,v) \mid (u,v) \in E(G) \setminus M \wedge \max(r^{0}_{G}(f,u), r^{0}_{G}(f,v)) = k \}
\cup \{(u,v) \mid (u,v) \in M \wedge \max(r^{1}_{G}(f,u), r^{1}_{G}(f,v)) =k \}$. We also define $F_{\leq k} = \cup_{0 \leq i \leq k} F_k$ and 
$G_k = T + F_{\leq k}$. Moreover, we define $F_0 = \emptyset$ as a sentinel.
Let $B_k$ be the set of all the bridges 
(i.e., the edge forming a cut of size one) in $G_k$.
Note that $B_h$ is a subset of $E(T)$ because $T$ is a spanning tree of $G$.
The following lemma is the key technical ingredient of our construction.
\begin{lemma}
\label{lma:3}
Let $F^c_k \subseteq F_k \setminus E(T)$ be an arbitrary subset of non-tree edges in $F_k$ satisfying $B_{k-1} \setminus B_{k} \subseteq \cup_{e \in F^c_k} E(\cyclet(e))$. Then, $(T + \cup_{1 \leq i \leq k} F^c_i) \succ G_{k}$ holds.  
In addition, the edge set $F^c = \cup_{0 \leq i \leq k} F^c_i$ contains at most $n-1$ edges.
\end{lemma}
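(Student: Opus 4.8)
The plan is to prove both assertions by induction on the level $k$, writing $H_k := T + \bigcup_{1 \le i \le k} F^c_i$ so that the target of the first part is $H_k \succ G_k$. Since $F^c_k \subseteq F_k$ we always have $H_k \subseteq G_k$, and because enlarging $F^c_k$ within $F_k$ can only add alternating paths, it suffices to establish $H_k \succ G_k$ for the smallest admissible choice of each $F^c_k$; the reachability statement for any larger admissible cover then follows by monotonicity, which also reconciles the word ``arbitrary'' in the hypothesis with the size claim. Concretely, I would fix, for every tree edge $t \in B_{k-1} \setminus B_k$, a single witness edge $e_t \in F_k \setminus E(T)$ with $t \in E(\cyclet(e_t))$ (such an edge exists precisely because $t$ ceases to be a bridge when $F_k$ is added to $G_{k-1}$), and set $F^c_k := \{ e_t : t \in B_{k-1} \setminus B_k \}$.

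I would dispatch the size bound first, as it is the routine half. Since $G_{k-1} \subseteq G_k$, every bridge of $G_k$ is a bridge of $G_{k-1}$, so $B_k \subseteq B_{k-1}$ and the families are nested, $B_0 \supseteq B_1 \supseteq \cdots$. With the witness choice above, $|F^c_k| \le |B_{k-1} \setminus B_k|$, whence $|F^c| = \sum_{1 \le i \le k} |F^c_i| \le \sum_{1 \le i \le k} |B_{i-1} \setminus B_i| = |B_0| - |B_k|$ by telescoping. Finally $G_0 = T + F_0 = T$ is a tree, so every one of its $n-1$ edges is a bridge, i.e.\ $B_0 = E(T)$ and $|B_0| = n-1$; hence $|F^c| \le n-1$.

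The substance is the inductive step for reachability. The base case $k=0$ is immediate, since $H_0 = G_0 = T$. Assuming $H_{k-1} \succ G_{k-1}$, suppose $v$ is $\theta$-reachable in $G_k$ and let $P$ be a shortest $\theta$-alternating path from $f$ to $v$ in $G_k$. If $P$ uses no edge of $F_k$, then $v$ is already $\theta$-reachable in $G_{k-1}$, and the induction hypothesis places it in $H_{k-1} \subseteq H_k$. Otherwise $P$ traverses an edge of $F_k$, and the plan is to show that every unit of reachability contributed by $F_k$ over $G_{k-1}$ is witnessed by crossing a bridge $t \in B_{k-1} \setminus B_k$ that the $F_k$-edge of $P$ resolves, and that the chosen witness $e_t \in F^c_k$ supports an alternating path of the same parity in $H_k$. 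I would build this path by splicing tree segments along $\cyclet(e_t)$ with reachability certificates for the endpoints of $e_t$ supplied by the induction hypothesis, exploiting that the level of $e_t$ is defined by the \emph{global} values $r^0_G$ and $r^1_G$, which pin down the parities available at its endpoints.

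The hardest part is exactly this rerouting, and it is where the novel characterization must do its work; two difficulties intertwine, alternation and simplicity. First, replacing an arbitrary $F_k$-edge of $P$ by the witness $e_t$ changes which side of the bridge cut one enters and with which parity, so I must argue that all level-$k$ edges resolving the same bridge enter the severed component with a compatible parity---a fact that should follow from the global, BFS-layered definition of the levels rather than from the identity of the edge. Second, the reconstructed object must be a genuine \emph{simple} alternating path and not merely an alternating walk; the danger is precisely the odd-cycle (blossom) phenomenon flagged in the introduction, where concatenation along $\cyclet(e_t)$ may revisit a vertex. I would control this by phrasing reachability preservation through the $2$-edge-connected component (\TCC) decomposition induced by the bridge set: resolving a bridge merges two components, and I expect to prove that the set of parities with which $f$ reaches each component in $G_k$ depends only on $T$ and on which bridges have been resolved by level $k$---not on the resolving edges---so that $H_k$, resolving the same bridges as $G_k$, realizes the same reachability. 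Establishing this invariant, together with the parity compatibility above, is the crux on which the lemma rests.
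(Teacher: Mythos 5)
Your outer induction on $k$, your choice of one witness edge per bridge, and your telescoping argument for the size bound all match the paper, and the size bound half is correct as written. But the core of the lemma --- showing that $H_k \succ G_k$ --- is announced rather than proved: the phrases ``the plan is to show'' and ``I expect to prove that the set of parities with which $f$ reaches each component in $G_k$ depends only on $T$ and on which bridges have been resolved'' defer exactly the statement that constitutes the lemma's content. Moreover, the route you sketch (take a shortest $\theta$-alternating path $P$ in $G_k$, locate the $F_k$-edge it uses, identify the bridge it resolves, and splice in the chosen witness $e_t$ along $\cyclet(e_t)$) runs into concrete obstacles you name but do not resolve: $P$ may use several $F_k$-edges; a single non-tree edge resolves every former bridge on its tree cycle, so the edge-to-bridge correspondence is many-to-many and the witness $e_t$ generally has a different cycle than the edge of $P$ it replaces; and after splicing you must re-establish both alternation at the two new junctions and simplicity of the result. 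None of these is handled.

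The paper avoids path rerouting entirely. It first proves an auxiliary distance-preservation lemma --- for any $v$ with $r^{\theta}_{G}(f,v) \le k+1$, one has $r^{\theta}_{G_k}(f,v) = r^{\theta}_{G}(f,v)$ --- and then proves $H_k \succ G_k$ by a \emph{nested} induction on $h = r^{\theta}_{G_k}(f,v)$, splitting on whether the parent edge $\edgep(v)$ of the alternating base tree is a bridge of $G_k$. If it is a bridge, the alternating-base-tree property reduces $\theta$-reachability of $v$ to $(1-\theta)$-reachability of $\parent(v)$, which is handled by the nested induction, and the bridge property guarantees the extension by $\edgep(v)$ is simple. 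If it is not a bridge, say $\edgep(v) \in B_{j-1}\setminus B_j$ with $j \le k$, the chosen outgoing edge $e=(u,w)$ of $T_v$ in $F^c_j$ together with the auxiliary lemma and the hypothesis $H_{j-1}\succ G_{j-1}$ yields two alternating paths $P$ (to $u$, necessarily through $v$) and $Q$ (to $w$, disjoint from $T_v$), and the explicit concatenation $Q\circ(w,u)\circ\overline{P^s_v}$ together with $P^p_v$ shows $v$ is \emph{bireachable} in $H_j$; simplicity and the two parities come for free from the bridge structure, which is precisely what defuses the blossom issue you flag. Your intuition that reachability is governed by the bridge decomposition is the right one --- it is essentially the paper's characterization --- but establishing it requires the per-node, per-distance induction and the auxiliary distance lemma, both of which are absent from your proposal. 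As it stands, the proof has a genuine gap at its central step.
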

This lemma naturally yields the following incremental construction of sparse certificates: each node $v$ identifies $k$ such that
$\edgep(v) \in B_{k-1} \setminus B_k$ holds, and if $T_v$ has an outgoing edge $e$
belonging to $F_k$, $v$ adds $e$ to $F^c_k$ (if $F_k$ contains two or more outgoing edges,
one is chosen arbitrarily).
Since $\cyclet(e)$ obliviously covers $\edgep(v)$, the constructed edge set $F^c_k$ satisfies 
the lemma. Consequently, $H = T + \cup_{1 \leq i \leq n} F^c_i \succ G_n$ is satisfied, and thus, $H$ is a sparse certificate.

Considering the distributed construction of $H$, a useful property of Lemma~\ref{lma:3} 
is that one does not have to wait for the computation
of $F^c_k$ to start the computation of $F^c_{k+1}$. As the information on $r^{\theta}_G(f,v)$ for $\theta \in 
\{0, 1\}$ is available to $v$, each node can identify the level of each incident edge. 
Thus, the construction of $F^c_k$ for all $k$ can be executed in parallel. The details of 
the distributed construction is explained in Section~\ref{sec:dis}.

\subsection{Proof Details}

Before proving Lemma~\ref{lma:3}, we prove an auxiliary lemma.

\begin{lemma}
\label{lma:1}
For any $\theta \in \{0, 1\}$ and $v \in V(G) \setminus \{f \}$ such that 
$r^{\theta}_{G}(f,v) \leq k+1$ holds, $r^{\theta}_{G_{k}}(f,v) = r^{\theta}_{G}(f,v)$ holds 
for all $k' \geq k$.
\end{lemma}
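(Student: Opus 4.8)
The plan is to prove the statement by induction on $k$, after two monotonicity reductions. Since $G_{k'}\subseteq G$ for every $k'$, we always have $r^\theta_{G_{k'}}(f,v)\ge r^\theta_G(f,v)$, so only the upper bound is at issue. Moreover $G_k\subseteq G_{k'}$ whenever $k\le k'$, hence $r^\theta_{G_{k'}}(f,v)\le r^\theta_{G_k}(f,v)$; it therefore suffices to establish the claim for the single smallest admissible index, i.e.\ to show $r^\theta_{G_k}(f,v)=r^\theta_G(f,v)$ whenever $r^\theta_G(f,v)\le k+1$. The case of general $k'\ge k$ then follows by sandwiching $r^\theta_G(f,v)\le r^\theta_{G_{k'}}(f,v)\le r^\theta_{G_k}(f,v)=r^\theta_G(f,v)$.

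For the reduced claim I would induct on $k$, driving the argument with the optimal-substructure recurrence that shortest alternating distances satisfy from the exposed root $f$: an even (resp.\ odd) shortest alternating path to $v$ must finish with the unique matching edge at $v$ (resp.\ with some non-matching edge), so $r^0_G(f,v)=r^1_G(f,p)+1$ where $p$ is the matching partner of $v$, and $r^1_G(f,v)=1+\min_{(u,v)\in E\setminus M} r^0_G(f,u)$. The base case $k=0$ is $G_0=T$: every $v$ with $r^\theta_G(f,v)\le 1$ is either $f$ (whose distances are fixed by the sentinel convention) or a neighbour of $f$ reached by a single non-matching edge, and the definition of the alternating base tree makes this edge equal to $\edgep(v)$, which already lies in $T=G_0$. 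For the inductive step, since $G_{k-1}\subseteq G_k$ and all smaller distances are preserved by the hypothesis, only the pairs with $r^\theta_G(f,v)=k+1$ require new treatment.

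The heart of the step is to realize such a distance-$(k+1)$ value inside $G_k$ by appending a single edge to a path already realized in $G_{k-1}$, splitting on whether $\theta$ equals the optimal parity $\gamma_G(v)$. If $\theta=\gamma_G(v)$, the defining property of the alternating base tree gives $r^\theta_G(f,v)=r^{1-\theta}_G(f,\parent(v))+1$, so $\parent(v)$ has opposite-parity distance $k$ (covered by the hypothesis in $G_{k-1}$) and the last edge is the tree edge $\edgep(v)\in T\subseteq G_k$. If instead $\theta\ne\gamma_G(v)$, then the optimal-parity distance $r^{\gamma_G(v)}_G(f,v)$ is at most $k$, and I take any predecessor $u$ realizing the recurrence for parity $\theta$, so that the relevant endpoint distance of $u$ equals $k$. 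The crucial observation is that the resulting edge $(u,v)$ then has \emph{level} at most $k$: its level is the maximum over its two endpoints of the \emph{same}-parity distance, and that parity is exactly $\gamma_G(v)$, so at $v$ the contributing value is the optimal distance $r^{\gamma_G(v)}_G(f,v)\le k$ while at $u$ it equals $k$. Hence $(u,v)\in F_{\le k}\subseteq G_k$. In both cases we obtain $r^\theta_{G_k}(f,v)\le k+1$.

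The step I expect to be the main obstacle is the usual pathology of alternating paths in non-bipartite graphs. The construction above naturally yields an alternating \emph{walk} of length $k+1$ in $G_k$ (a realized path to the predecessor followed by the new edge), and because of odd cycles (blossoms) this concatenation need not be simple if the predecessor's path already passes through $v$. Reconciling this---arguing that a realizing path to the predecessor avoiding $v$ can always be chosen, equivalently that the walk distance coincides with the simple-path distance $r^\theta$ in the relevant range---is the delicate point; I expect to settle it by a minimality argument showing that if every realizing path met $v$, the sub-path to $v$ would force a contradictory value of $r^{\theta}_G(f,v)$ or $r^{1-\theta}_G(f,v)$. It is precisely here that the two design choices pay off: tree edges absorb the optimal-parity case with no level constraint, while the $\max$-based definition of $F_k$ guarantees, in the non-optimal case, that the trailing endpoint $v$ contributes only its small optimal-parity distance.
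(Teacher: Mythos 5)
Your reduction to the case $k'=k$, the induction on $k$, the split on whether $\theta=\gamma_G(v)$, and the level computation for the last edge in the non-optimal-parity case (the $\max$ in the definition of $F_k$ picks up $r^{\gamma_G(v)}_G(f,v)\le k$ at $v$ and the prefix distance at the predecessor) all match the paper's Case~1 / Case~2a reasoning and are correct as far as they go. But there is a genuine gap exactly at the point you flag and then defer: turning the alternating \emph{walk} ``inductively realized path to the predecessor, plus one new edge'' into a \emph{simple} alternating path. First, the recurrence you invoke, $r^1_G(f,v)=1+\min_{(u,v)\in E\setminus M}r^0_G(f,u)$, is false as an equality in non-bipartite graphs (hang a blossom off a stem: for the base vertex $v$ of the blossom, every $0$-alternating path to its non-matching neighbours passes through $v$ itself, so the right-hand side is finite while $r^1_G(f,v)=\infty$); only the direction ``extract a predecessor from a genuine shortest path'' is sound. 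Second, and more importantly, your proposed repair --- that if every realizing path met $v$, the sub-path up to $v$ would force a contradictory distance --- does not go through: in the case $\theta\ne\gamma_G(v)$ that prefix has parity $1-\theta=\gamma_G(v)$ and length less than $k$, which merely re-certifies $r^{\gamma_G(v)}_G(f,v)\le k$, a fact already known and perfectly consistent with all hypotheses; no contradiction arises. The same simplicity issue touches your optimal-parity case as well; there the paper closes it by first proving that $\edgep(v)$ is a \emph{bridge} of $G_k$ (every node of $T_v$ has optimal distance at least $k+1$, so $T_v$ has no outgoing edge of level at most $k$), whence a simple path from $f$ to $\parent(v)$ cannot enter $T_v$ and in particular avoids $v$.

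For the non-optimal-parity case the paper does something structurally different from appending an edge to an inductively obtained path: it takes the shortest $\theta$-alternating path $Q$ from $f$ to $v$ in $G$ itself --- which is simple by definition --- and proves that \emph{every} edge of $Q$ is either a tree edge or has level at most $k$, hence $Q\subseteq G_k$ and $r^\theta_{G_k}(f,v)\le k+1$ with no concatenation needed. That claim is established by contradiction on the highest-indexed violating non-tree edge $e_j$ of $Q$, with three subcases ($j=k+1$; $j<k+1$ with $e_j$ matched; $j<k+1$ with $e_j$ unmatched), the last of which performs a careful path surgery at the \emph{first} intersection vertex of two paths precisely so that the resulting witness path is simple. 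Some argument of this kind, or an explicit rerouting lemma, is needed to close your proof; as written, the central difficulty is correctly named but not resolved.
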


\begin{proof}
The proof is based on induction on $k$. (Basis) $k = 0$: Let $v$ be any node
satisfying $r^{\theta}_G(f,v) \leq k + 1$ for some $\theta \in \{0, 1\}$, and 
let $Q$ be the $\theta$-shortest path from $f$ to $v$ in $G$. This path is 
contained in $T$ because $v$ chooses $f$ as its parent in $T$.
(Inductive Step): As the induction hypothesis, suppose 
$r^{\theta}_{G_{k - 1}}(f,u) = r^{\theta}_G(f,u)$ holds (and also $r^{\theta}_{G_k}(f,u) = r^{\theta}_G(f,u)$ holds because of $G_{k-1} \subseteq G_k$) for any $u$ and $\theta$ satisfying $r^{\theta}_G(f,u) \leq k$.
Consider any node $v$ such that $r^{\theta}_G(f,v) \leq k+1$ holds. As 
the case of 
$r^{\theta}_{G}(f,v) < k + 1$ is evidently proved by the induction hypothesis, 
we assume $r^{\theta}_{G}(f,v) = k + 1$. The proof consists of the
following two cases.

\noindent
\textbf{(Case 1)} $\gamma_G(v) = \theta$: By the definition of alternating base trees, we have $r^{1-\theta}_{G}(f,\parent(v)) = r^{\theta}_{G}(f,v) - 1 = k$.
In addition, for any $w \in T$, $r^{\gamma_G(w)}_{G}(f,w) = r^{1-\gamma_G(w)}_{G}(f,\parent(w))+1 >  r^{\gamma_{G}(\parent(w))}_{G}(f,\parent(w))$ holds.
Therefore any node $w \in T_v$ satisfies $r^{\gamma_G(w)}_G(f,w) \geq k + 1$. Then, any outgoing non-tree edge of $T_v$ has a level of at least $k+1$. That is, $\edgep(v)$ is the bridge in $G_k$. 
Since $r^{1 - \theta}_{G}(f,\parent(v)) = k$ holds, the induction hypothesis
yields $r^{1 - \theta}_{G_k}(f,\parent(v)) = k$ and thus 
there exists a $(1 - \theta)$-alternating path $P$ from $f$ to $\parent(v)$ in $G_k$.
Due to the fact that $\edgep(e)$ is a bridge, $P$ does not contain $v$.
Hence the concatenated path $P \circ \edgep(e)$ is a $\theta$-alternating path
from $f$ to $v$ in $G_k$ of length $k + 1$. That is, 
$r^{\theta}_{G_k}(f,v) = r^{\theta}_G(f,v)$ holds.

\noindent
\textbf{(Case 2)} $\gamma_G(v) = 1 - \theta$: 
Let $Q = v_0, e_1, v_1, e_2, \dots, e_{k+1}, v_{k+1}$ be the shortest $\theta$-alternating path from $f$ to $v$ in $G$ ($f = v_0$ and $v =  v_{k+1}$).
To prove the lemma, it suffices to show that any edge in $Q$ has a level of at most $k$ or is an edge in $E(T)$. Suppose for contradiction that a non-tree 
edge $e_j$ has the level $k' > k$. Without loss of generality, we assume that 
$j$ is the highest value for which this condition is satisfied. That is, any edge $e_{j'}$ 
for $j' > j$ has a level of at most $k$ or an edge in $T$. We define $\rho$ as $\Indi_M(e_j)$. We further divide Case 2 into the following three subcases.

\noindent
\textbf{(Case 2a)} $j = k + 1$: Since $Q$ is the shortest $\theta$-alternating path of length $k+1$, $\rho = 1 - \theta$ holds, and $Q^{p}_{v_k}$ is a 
$(1 - \theta)$-alternating path from
$f$ to $v_k$ of length $k$.
From the condition $\gamma_G(v) = \gamma_G(v_{k+1}) = 1 - \theta$ for Case 2, 
$r^{1 - \theta}_G(f,v_k) \leq k$ and $r^{1 - \theta}_G(f,v_{k+1}) < r^{\theta}_G(f,v_{k+1}) = k + 1$ hold. That is,
the level of $e_{j} = e_{k+1}$ is at most $k$, which is a contradiction.

\noindent
\textbf{(Case 2b)} $j < k + 1$ and $\rho = 1$: Since the length of $Q^{p}_{v_j}$ is $j$, we have $r^{0}_{G}(f,v_{j})\leq j \leq k$.
From the induction hypothesis, $G_{k-1}$ contains a $0$-alternating path $Q'$ 
from $f$ to $v_j$. In other words, $v_j$ has a $0$-alternating path $Q'$ such that any non-tree edge
in $E(Q')$ has a level of at most $k$. The assumption of $\rho = 1$ implies that 
$Q'$ must terminate with a matching edge incident to $v_j$, i.e., the edge $e_j$. This is a contradiction because we assume that $e_j$ is not contained in $G_{k-1}$. 

\noindent
\textbf{(Case 2c)} $j < k + 1$ and $\rho = 0$: 
We denote $R = Q^{s}_{v_j}$ as shorthand.
As the length of $Q^{p}_{j}$ is $j \leq k$, from the induction hypothesis,
we have $r^{1}_{G_{k}}(f,v_{j})=r^{1}_{G}(f,v_{j})\leq j$, and thus, 
there exists the shortest $1$-alternating path $P$ from $f$ to $v_j$ in $G_{k}$.
Let $v_h \in V(R) \cap V(P)$ be the first node in $P$, which also belongs to $R$.
If $e_{h+1}$ is a matched edge, $P^{p}_{v_{h}}\circ Q^{s}_{v_{h}}$ is a 
$\theta$-alternating path in $G_{k}$ (see Figure~\ref{fig:alt2}(a)), 
the length of which is bounded by $|P^{p}_{v_{h}}\circ Q^{s}_{v_{h}}|\leq |P| + (k+1 - h) \leq j + (k+1 - j) \leq k+1$. Hence, we obtain $r^{\theta}_{G_{k}}(f,v_{k+1})\leq k+1 = r^{\theta}_{G}(f,v_{k+1})$,
which is a contradiction.
If $e_{h+1}$ is an unmatched edge, $e_{h}$ is a matched edge.
Therefore, $P^{p}_{v_{h}}\circ \overline{R^{p}_{v_h}}$ is a $0$-alternating path from $f$ to $v_{j}$ in $G_{k}$ (see Figure~\ref{fig:alt2}(b)).
Since we consider the case of $\rho =0$, the edge $e_{j}$ is an unmatched edge.
Therefore, $v_{h}\neq v_{j}$ holds, and thus $v_{h}$ is not the last node of $P$.
This implies $|P^{p}_{v_{h}}| \leq j-1$.
We obtain $|P^{p}_{v_{h}}\circ \overline{R^{p}_{v_h}}|\leq j-1 + (k+1-h) \leq j-1 + (k+1-j) \leq k$, and thus, $r^{0}_{G}(f,v_j)\leq r^{0}_{G_{k}}(f,v_j)\leq k$.
Since $Q^{s}_{v_{j-1}}$ is a $0$-alternating path from $f$ to $v_{j-1}$ of length $j-1$, we have $r^{0}_{G}(f,v_{j-1})\leq j-1 \leq k$.
This implies that the level of $e_{v_{j}}$ is at most $k$,
which is a contradiction.
\end{proof}

\begin{figure*}[tb]
\begin{center}
\includegraphics[width=13cm]{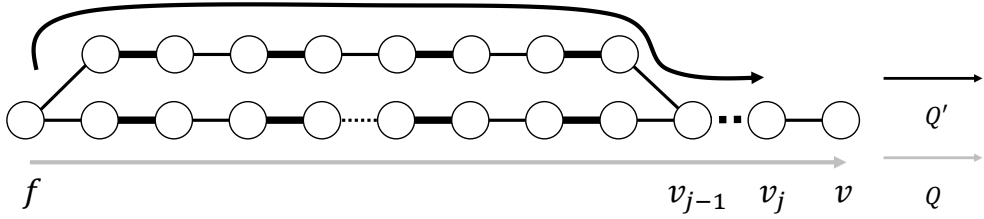}
\end{center}
 \caption{Proof of Lemma~\ref{lma:1} for (Case 2b). Bold lines are matching edges, and thin lines are unmatched edges. The dotted line is the edge included in $G$ but not in $G_{k}$.
Note that the edge $(v_{j-1},v_{j})$ is actually included in $G_{k}$, but it is drawn with a dotted line for explaining the contradiction. 
 }
 \label{fig:alt1}
\end{figure*}

\begin{figure*}[tb]
\begin{center}
\includegraphics[width=13cm]{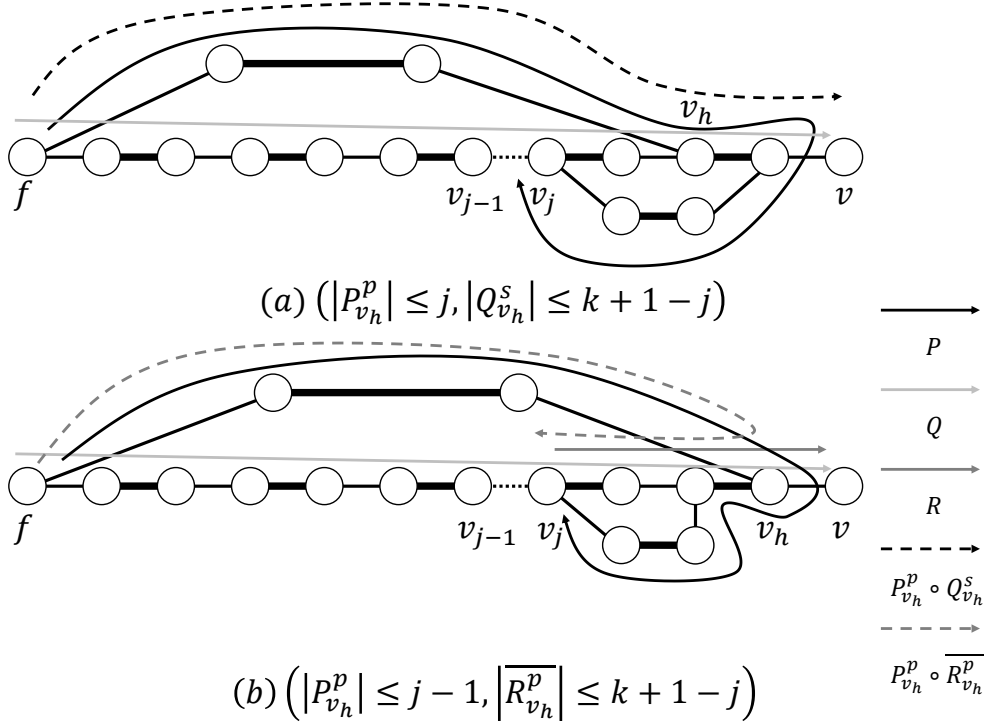}
\end{center}
 \caption{Proof of Lemma~\ref{lma:1} of (Case 2c). Bold lines are matching edges, and thin lines are unmatched edges. The dotted line is the edge included in $G$ but not in $G_{k}$.}
 \label{fig:alt2}
\end{figure*}

Now, we present the proof of Lemma~\ref{lma:3}.
\begin{proof}
 
Let $F^c_{\leq k} = \cup_{1 \leq i \leq k} F^c_i$ and $H_k = T + F^c_{\leq k}$. 
We prove the lemma inductively. For $k = 0$, $H_0 = T \succ G_0 = T$ evidently holds.
Thus, it suffices to show $H_k \succ G_k$, assuming $H_{k'} \succ G_{k'}$ for all 
$0 \leq k' < k$. For any $0 \leq h \leq n$, we define 
$U_h = \{(v, \theta) \mid v \in V(G) \wedge r^{\theta}_{G_k}(f,v) = h\}$. If
$v$ is $\theta$-reachable in $H_k$ for all $0 \leq h \leq n$ and $(v, \theta) \in U_h$, 
we can conclude that $H_k \succ G_k$. The proof of this statement follows the (nested) 
induction on $h$. (Basis) As $U_0$ contains only $(f, 1)$, the statement evidently holds. (Inductive Step) As the induction hypothesis, suppose $v$ is $\theta$-reachable for
any $(v, \theta) \in \cup_{0 \leq i \leq h} U_i$, and consider any pair $(v, \theta$) 
in $U_{h+1}$. Then, we consider the following two cases.

\noindent
\textbf{(Case 1)} $\edgep(v)$ is a bridge in $G_k$: We have 
$r^{1 -\theta}_G(f,\parent(v)) = h $ from the definition of alternating base trees. 
Since the induction hypothesis guarantees that $\parent(v)$ preserves the reachability
of $G_{k}$ in $H_k$, there exists a $(1 - \theta)$-alternating path $P$ from $f$ to $\parent(v)$ in $H_k$. In addition, $P$ does not contain $\edgep(e)$, because 
$\edgep(e)$ is a bridge in $H_k \subseteq G_k$. From
$\Indi_M(\edgep(v)) = 1 - \theta$, which directly follows from the definition of alternating base trees, the concatenated path $P \circ \edgep(v)$ becomes a $\theta$-alternating 
path from $f$ to $v$ in $H_k$ (see Figure~\ref{fig:bridge}~(1)). Then, $v$ is $\theta$-reachable in $H_k$.

\noindent
\textbf{(Case 2)} $\edgep(v)$ is not a bridge in $G_k$: As 
$G_0 \subseteq G_1 \subseteq \dots, \subseteq G_k$ holds, there exists $1 \leq j \leq k$ 
such that $\edgep(G) \in B_{j - 1} \setminus B_{j}$ holds. Then, $F^c_{j}$ contains an 
outgoing edge $e$ of $T_v$ belonging to $F_{j}$. 
Let $e = (u, w)$ and $u$ be the side contained in $T_v$. We assume that $e$ is not a matching edge. By symmetry, the case of $e \in M$ is proved similarly. 
From the definition of $F_{j}$, we have $\max\{r^{1}_G(f,u), r^{1}_G(f,w)\} = j \leq k$. Lemma~\ref{lma:1} implies that both $u$ and $w$ have $1$-alternating paths from 
$f$ in $G_{j - 1}$; from the induction hypothesis $H_{j - 1} \succ G_{j-1}$,
they have $1$-alternating paths from $f$ also in $H_{j-1}$, which we refer to as $P$ and $Q$, respectively. 
Since $\edgep(v)$ is a bridge of $G_{j-1} \supseteq H_{j-1}$, the suffix 
$P^s_v$ is a subgraph of $T_v$. 
In addition, $Q$ does not intersect $V(T_v)$, because both $f$ and $w$ are outside $T_v$. Thus, $P^s_v$ and $Q$ are mutually disjoint, and the concatenated path $Q' = Q 
\circ (w, u) \circ \overline{P^s_v}$ is simple. It is easy to check that $Q'$ is an 
alternating path from $f$ to $v$. As $Q$, $e$, and $\overline{P^s_v}$ are all 
contained in $H_{j-1} + F^c_{j} = H_j$, $P^p_v$ and $Q'$ are contained in $H_j$ (see Figure~\ref{fig:bridge}~(2)).
The alternating paths $P^s_v$ and $Q'$ have different parities because their last edges are adjacent in $P$. Hence, we conclude that $v$ is bireachable in $H_j$.

The remaining matter in the proof is to provide a bound on the size of $\cup_{0 \leq i \leq n-1} F^c_i$.
Because $G_k \subseteq G_{k+1}$ holds for any $0 \leq k \leq n-1$, we have 
$B_{k+1} \subset B_k$, which implies that $B_k \setminus B_{k+1}$ for all $k$ are mutually disjoint. Then, $\sum_{0 \leq i \leq n-1} |B_k \setminus B_{k+1}| = |B_0| = n-1$
holds. Since at most one edge is augmented for each edge in $B_k \setminus B_{k+1}$,
the size $|\cup_{0 \leq i \leq n-1} F^c_i|$ is also bounded by $n-1$. 
\end{proof}

\begin{figure*}[tb]
\begin{center}
\includegraphics[width=13cm]{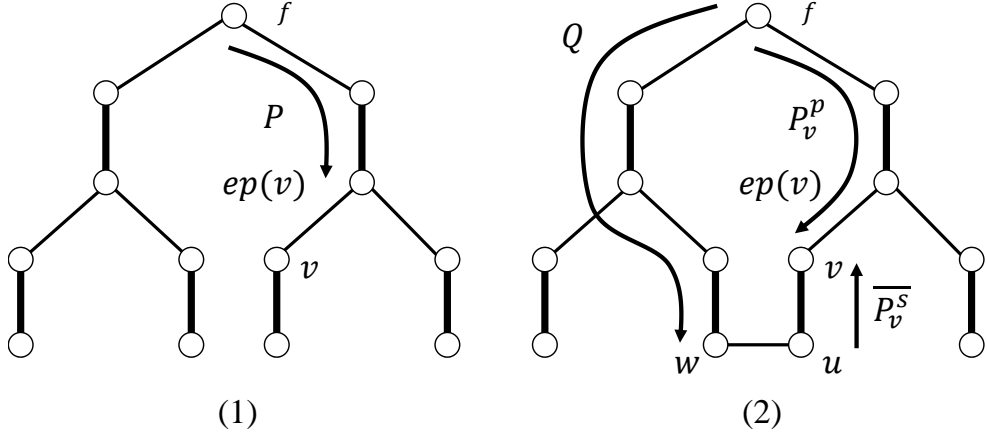}
\end{center}
 \caption{Proof of Lemma~\ref{lma:3}. Bold lines are matching edges, and thin lines are unmatched edges.}
 \label{fig:bridge}
\end{figure*}
\subsection{Distributed Implementation}
\label{sec:dis}

This section explains how to implement the centralized sparse certificate algorithm, presented in Section~\ref{sec:out_n}, in the CONGEST model to obtain the algorithm of Theorem~\ref{lma:n}.
It is relatively straightforward 
to construct the alternating base tree $T$. From the preprocessing run of $\MV(M, n, f)$, 
each node $v$ has information on the values of $r^1_G(f,v)$ and $r^0_G(f,v)$; thus, it has information on $\gamma_G(v)$ as well. 
Then, $v$ chooses an arbitrary neighbor $u$ of $v$ satisfying the second condition 
of the alternating base tree as its parent (i.e., it chooses $(v, u)$ as an edge of $T$). 
Algorithm~\ref{alg:base} presents the pseudocode of the alternative base tree construction. This algorithm is a local algorithm, which is implemented in zero round.

The main idea of constructing the edge set $F^c = \cup_{1 \leq i \leq n} F^c_i$ in the 
distributed manner is implemented by the CONGEST algorithm \textsf{ConstF$(k)$}, where each node $v$ outputs an outgoing edge of $T_v$ of level $k$ if it exists (or $\perp$ otherwise).
Let $d$ be the height of the constructed alternating base tree $T$.
Given a non-tree edge $e = (u, w) \in E(G) \setminus E(T)$, the depth of the lowest common ancestor of $u$ and $w$ is denoted by $\lca(e)$. 
In addition, we introduce the ordering relation $\leq_{\lca}$ over all 
non-tree edges as $e_1 \leq_{\lca} e_2 $ if and only if $\lca(e_1) \leq \lca(e_2)$.
The algorithm \textsf{ConstF} works under the assumption that 
for any non-tree edge $e = (u, v)$, $u$ and $v$ have information on the value of $\lca(e)$. This assumption is
realized by the following $O(d)$-round preprocessing.
\begin{enumerate}
\item Each node $v$ computes its depth $d_v$ in $T$ through a downward message propagation from $f$ along $T$. The root $f$ first sends to its children the value one. The node $v$ receiving message $i$ decides $d_v = i$ and sends the value $i+1$ to its chldren. 
\item Each node $v$ broadcasts the pair of its ID and depth $(v, d_v)$ to all the nodes in $T_v$. First, each node sends the pair to its children. In the following rounds, each node forwards the message from its parents to the children. This task finishes within $O(d)$ rounds. 
\item The broadcast information of the previous step allows each node $v$ to identify the 
path $p_T(v)$ from $v$ to $f$ in $T$. For all non-tree edges $e = (u, v)$, $u$ and $v$ exchange $p_T(v)$ (taking $O(d)$ rounds) and compute the value of $\lca(e)$.
\end{enumerate}

The pseudocode of Algorithm \textsf{ConstF$(k)$} is presented in Algorithm~\ref{alg:constf}.
Let $E^{\ast}(T_v)$ be the set of non-tree edges $e$ such that at least one endpoint 
of $e$ belongs to $V(T_v)$. Each node $v$ computes the minimum edge 
$e_v \in E^{\ast}(T_v) \cap F_k$ with respect to $\leq_{\lca}$. This task is 
implemented through a standard aggregation over $T$. Each leaf node $v$ sends 
the minimum edge $e$ in $F_k \cap E^{\ast}(T_v)$. If $F_k \cap E^{\ast}(T_v) = \emptyset$
holds, the leaf sends a dummy edge $e$ such that $\lca(e) = \infty$ holds (the edge sent to
the parent is implicitly associated with the value of $\lca(e)$ to admit the comparison based on $\leq_{\lca}$). Let $X$ be the set of edges a non-leaf node $v$ received from 
its children. Then, $v$ chooses $e_v$ as the minimum edge in 
$X \cup (I(v) \cap F_k \cap E^{\ast}(T_v))$ with respect to $\leq_{\lca}$ and sends the chosen edge to
$\parent(v)$. Finally, $v$ outputs $e_v$ if $\lca(e_v) < d_v$ holds or $\perp$ otherwise.
The correctness of \textsf{ConstF$(k)$} follows the proposition below.
\begin{proposition}
Let $e$ be the minimum edge in $E^{\ast}(T_v)$ with respect to $\leq_{\lca}$. 
Then, $e$ is an outgoing edge of $T_v$ if and only if $\lca(e) < d_v$ holds (thus,
$\edgep(v)$ is a bridge if $\lca(e) \geq d_v$ holds).
\end{proposition}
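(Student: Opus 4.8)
The plan is to prove the proposition by first isolating a purely combinatorial equivalence that holds for \emph{every} individual non-tree edge, and only afterwards to specialize it to the $\leq_{\lca}$-minimum edge. Concretely, I would establish the claim that for any non-tree edge $e=(u,w)\in E^{\ast}(T_v)$, the edge $e$ is outgoing from $T_v$ (i.e.\ exactly one of $u,w$ lies in $V(T_v)$) if and only if $\lca(e)<d_v$. Writing $a$ for the lowest common ancestor of $u$ and $w$ in $T$, so that $\lca(e)=d_a$ by definition, the single structural fact driving the whole argument is that $T_v$ is connected in $T$ and is attached to the rest of the tree through the unique edge $\edgep(v)$; hence every tree path leaving $T_v$ must traverse $\edgep(v)$ and pass through $v$.

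For the forward direction I would assume $e$ is outgoing, say $u\in V(T_v)$ and $w\notin V(T_v)$, and follow the unique $u$--$w$ tree path, which ascends from $u$ to $a$ and then descends from $a$ to $w$. Since this path escapes $T_v$, it uses $\edgep(v)$; the crux is a short case split on which half carries that edge. If $\edgep(v)$ lay on the descending half $a\to w$, then $v$ would be an ancestor of $w$ and thus $w\in V(T_v)$, a contradiction, so $\edgep(v)$ must lie on the ascending half $u\to a$, which forces $a$ strictly above $v$ and hence $d_a\le d_v-1<d_v$. For the converse I would assume $d_a<d_v$; since $e\in E^{\ast}(T_v)$ at least one endpoint, say $u$, lies in $T_v$, and if the other endpoint also lay in $T_v$ then $v$ would be a common ancestor of both, forcing $d_a\ge d_v$ and contradicting $d_a<d_v$. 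Thus exactly one endpoint lies in $T_v$ and $e$ is outgoing. I expect this case split on the ascending versus descending portion of the tree path to be the only genuinely delicate point; everything else is bookkeeping on depths together with the subtree-connectivity of $T_v$.

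With the per-edge equivalence in hand, the first assertion of the proposition is simply its restriction to the minimum edge $e$. For the bridge corollary I would use that the fundamental cycle $\cyclet(e')$ of a non-tree edge $e'$ contains $\edgep(v)$ exactly when $e'$ is outgoing from $T_v$, so $\edgep(v)$ lies on a cycle of $G$ if and only if $E^{\ast}(T_v)$ contains an outgoing edge. Because outgoing edges have $\lca<d_v$ while internal edges of $T_v$ have $\lca\ge d_v$, the $\leq_{\lca}$-minimum is realized by an outgoing edge whenever one exists; consequently $\lca(e)\ge d_v$ certifies that $E^{\ast}(T_v)$ has no outgoing edge (the empty case being absorbed by the dummy value $\lca=\infty$), whence $\edgep(v)$ sits on no fundamental cycle and is therefore a bridge. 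The identical argument goes through verbatim with $E^{\ast}(T_v)$ replaced by $E^{\ast}(T_v)\cap F_k$, which is exactly the form consumed by \textsf{ConstF$(k)$}.
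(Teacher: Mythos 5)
Your proof is correct, and it supplies exactly the argument the authors evidently intend: the paper states this proposition without any proof, treating it as a standard fact about subtrees and fundamental cycles. Your per-edge equivalence (outgoing from $T_v$ iff the LCA lies strictly above $v$, via the observation that every tree path leaving $T_v$ must cross $\edgep(v)$), together with the observation that internal edges of $T_v$ have $\lca \ge d_v$ so the $\leq_{\lca}$-minimum is outgoing whenever any outgoing edge exists, is sound and is precisely what the correctness of \textsf{ConstF}$(k)$ rests on.
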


The edge set $F^c$ is constructed by running 
\textsf{ConstF$(k)$} for all $1\leq k \leq n$. As this algorithm is implemented by one-shot aggregation over $T$, one can utilize the standard pipelining technique for completing \textsf{ConstF$(k)$} for all 
$1 \leq k \leq n$, which takes $O(n)$ rounds in total (including 
the preprocessing step of computing $\lca(e)$). 
The result of \textsf{ConstF} provides node $v$ with the information of 
the minimum $k$, such that $\edgep(v) \in B_{k-1} \setminus B_k$, as well as 
an outgoing edge of $T_v$ in $F_k$.
Following Lemma~\ref{lma:1}, each node $v$ can decide the edge $e$ that should be added to $F^c = \cup_{1 \leq i \leq n} F^c_i$. 

\begin{algorithm*}[tb]
\caption{Construction of the alternating base tree for $v_{i}$: \textsf{ABT}$((G,M))$}
\label{alg:base}
\begin{algorithmic}[1]
\REQUIRE The graph induced by the edge set $\bigcup_{i:v_{i}\in V}E_{i}$ is an alternating base tree.
\STATE $E_{i}$: initially $\empty$  $\emptyset$.
\IF {$v\neq f$}
\STATE choose edge $(u,v)$ that is incident on the vertex $v$ and satisfies $r^{\gamma(v)}_{I}(f,v)=r^{1-\gamma(v)}_{I}(f,u)$ and $\Indi((u,v))=1-\gamma(v)$ (if multiple edges satisfy these conditions, the node arbitrarily chooses one).
\STATE $E_{i}\leftarrow E_{i} \cup {(u,v)}$.
\ENDIF
\end{algorithmic}
\end{algorithm*}
\begin{algorithm*}[tb]
\caption{Construction of $F^{c}_{k}$ for $v_{i}$: \textsf{ConstF}$(k)$}
\label{alg:constf}
\begin{algorithmic}[1]
\REQUIRE The edge $e_{i}$ is an outgoing edge of $T_{v_{i}}$ if node $v_{i}$ outputs $e_{i}$; otherwise, $T_{v_{i}}$ does not have an outgoing edge.
\FOR{$i=1;i\leq d;i ++$}
	\IF{$v_{i}$ is a leaf node}
		\IF{$I(v_{i})\cap F_{k}\cap E^{*}(T_{v})=\emptyset$}
			\STATE $e_{v_{i}}\leftarrow$ dummy edge $e$ such that $\lca(e)=\infty$.
		\ELSE 
			\STATE $e_{v_{i}} \leftarrow \min_{e\in I(v_{i})\cap F_{k}\cap E^{*}(T_{v_{i}})}e$ w.r.t.~$\leq_{\lca}$.
		\ENDIF
		\IF{$v_{i}\neq f$}
			\STATE send $e_{v_{i}}$ to its parent.
		\ENDIF
	\ELSE 
			\IF{$v_{i}$ receives the set of edges $X$ from all its children}
				\STATE $e_{v_{i}} \leftarrow \min_{e\in X\cup (I(v_{i})\cap F_{k}\cap E^{*}(T_{v_{i}}))}e$ w.r.t.~$\leq_{\lca}$.
			\ENDIF
	\ENDIF
\ENDFOR
\IF{$\lca(e_{v_{i}})\leq d(v_{i})$}
	\STATE output $e_{v}$.
\ELSE
	\STATE output $\perp$.
\ENDIF
\end{algorithmic}
\end{algorithm*}

\section{Conclusion}
We proposed the randomized $O(s_{\max}^{3/2}+\log n)$-rounds (i.e. $O(n^{3/2})$-rounds) algorithm for computing a maximum matching in the CONGEST model, which is the first one of attaining $o(n^2)$-round
complexity for general graphs. Our algorithm follows the standard augmenting-path approach, and the technical core lies two fast algorithms of finding augmenting paths respectively running in $O(\ell^2)$ and $O(s_{\max})$ rounds.

While we believe that our result is a big step toward the goal of revealing the tight round complexity 
of the exact maximum matching problem, the gap between the upper and lower bounds are still large. It
should be noted that we leave the possibility of much faster augmenting path algorithms. Once 
an $o(\ell^2)$-round or $o(s_{\max})$-round algorithm of finding an augmenting path is invented,
the upper bound automatically improves. This direction is still promising.
\section*{Acknowledgement}
This work was supported by JSPS KAKENHI Grant Numbers  JP19J22696, 20H04140, 20H04139,
and 19K11824.
%

%
%

%
\nocite{*}

\bibliographystyle{plain}
\bibliography{reference}

\begin{thebibliography}{10}

\bibitem{AK20}
Mohamad Ahmadi and Fabian Kuhn.
\newblock Distributed maximum matching verification in congest.
\newblock In {\em 34th International Symposium on Distributed Computing
  (DISC)}, pages 37:1--37:18, 2020.

\bibitem{AKO18}
Mohamad Ahmadi, Fabian Kuhn, and Rotem Oshman.
\newblock Distributed approximate maximum matching in the congest model.
\newblock In {\em 32rd International Symposium on Distributed Computing
  (DISC)}, pages 6:1--6:17, 2018.

\bibitem{bacrach2019hardness}
Nir Bacrach, Keren Censor-Hillel, Michal Dory, Yuval Efron, Dean Leitersdorf,
  and Ami Paz.
\newblock Hardness of distributed optimization.
\newblock In {\em 2019 ACM Symposium on Principles of Distributed Computing
  (PODC)}, pages 238--247, 2019.

\bibitem{BCGS17}
Reuven Bar-Yehuda, Keren Censor-Hillel, Mohsen Ghaffari, and Gregory
  Schwartzman.
\newblock Distributed approximation of maximum independent set and maximum
  matching.
\newblock In {\em 36th annual ACM Symposium on Principles of Distributed
  Computing (PODC)}, pages 165--174, 2017.

\bibitem{BKS18}
Ran Ben-Basat, Ken-ichi Kawarabayashi, and Gregory Schwartzman.
\newblock Parameterized distributed algorithms.
\newblock In {\em 33rd International Symposium on Distributed Computing
  (DISC)}, pages 6:1--6:16, 2018.

\bibitem{BN18}
Aaron Bernstein and Danupon Nanongkai.
\newblock Distributed exact weighted all-pairs shortest paths in near-linear
  time.
\newblock In {\em Proc. of the 51st Annual ACM SIGACT Symposium on Theory of
  Computing (STOC)}, page 334–342, 2019.

\bibitem{Blum1990}
Norbert Blum.
\newblock A new approach to maximum matching in general graphs.
\newblock In {\em International Colloquium on Automata, Languages, and
  Programming}, pages 586--597, 1990.

\bibitem{KS17}
Keren Censor-Hillel, Seri Khoury, and Ami Paz.
\newblock {Quadratic and Near-Quadratic Lower Bounds for the CONGEST Model}.
\newblock In {\em 31st International Symposium on Distributed Computing
  (DISC)}, pages 10:1--10:16, 2017.

\bibitem{CPSZ19}
Yi-Jun Chang, Seth Pettie, and Hengjie Zhang.
\newblock Distributed triangle detection via expander decomposition.
\newblock In {\em Thirtieth Annual ACM-SIAM Symposium on Discrete Algorithms},
  pages 821--840, 2019.

\bibitem{daga2019distributed}
Mohit Daga, Monika Henzinger, Danupon Nanongkai, and Thatchaphol Saranurak.
\newblock Distributed edge connectivity in sublinear time.
\newblock {\em arXiv preprint arXiv:1904.04341}, 2019.

\bibitem{DEMN20}
Michal Dory, Yuval Efron, Sagnik Mukhopadhyay, and Danupon Nanongkai.
\newblock Distributed weighted min-cut in nearly-optimal time.
\newblock {\em arXiv}, 2020.

\bibitem{Edmonds2}
Jack Edmonds.
\newblock Maximum matching and a polyhedron with 0,1-vertices.
\newblock {\em Journal of Research of the National Bureau of Standards Section
  B Mathematics and Mathematical Physics}, page 125, 1965.

\bibitem{Edmonds}
Jack Edmonds.
\newblock Paths, trees, and flowers.
\newblock {\em Canadian Journal of mathematics}, pages 449--467, 1965.

\bibitem{FGKO18}
Orr Fischer, Tzlil Gonen, Fabian Kuhn, and Rotem Oshman.
\newblock Possibilities and impossibilities for distributed subgraph detection.
\newblock In {\em 30th on Symposium on Parallelism in Algorithms and
  Architectures (SPAA)}, pages 153--162, 2018.

\bibitem{FN18}
Sebastian Forster and Danupon Nanongkai.
\newblock A faster distributed single-source shortest paths algorithm.
\newblock In {\em 59th {IEEE} Annual Symposium on Foundations of Computer
  Science {(FOCS)}}, pages 686--697, 2018.

\bibitem{FHW12}
Silvio Frischknecht, Stephan Holzer, and Roger Wattenhofer.
\newblock Networks cannot compute their diameter in sublinear time.
\newblock In {\em Proc. of the Twenty-Third Annual ACM-SIAM Symposium on
  Discrete Algorithms (SODA)}, pages 1150--1162, 2012.

\bibitem{GT91}
Harold~N Gabow and Robert~E Tarjan.
\newblock Faster scaling algorithms for general graph matching problems.
\newblock {\em Journal of the ACM (JACM)}, pages 815--853, 1991.

\bibitem{GH16}
Mohsen Ghaffari and Bernhard Haeupler.
\newblock Distributed algorithms for planar networks {II:} low-congestion
  shortcuts, mst, and min-cut.
\newblock In {\em Proceedings of the twenty-seventh annual ACM-SIAM symposium
  on Discrete algorithms (SODA)}, pages 202--219, 2016.

\bibitem{GF15}
Mohsen Ghaffari, Andreas Karrenbauer, Fabian Kuhn, Christoph Lenzen, and Boaz
  Patt-Shamir.
\newblock Near-optimal distributed maximum flow.
\newblock In {\em 2015 ACM Symposium on Principles of Distributed Computing
  (PODC)}, pages 81--90, 2015.

\bibitem{ghaffari2013distributed}
Mohsen Ghaffari and Fabian Kuhn.
\newblock Distributed minimum cut approximation.
\newblock In {\em International Symposium on Distributed Computing}, pages
  1--15. Springer, 2013.

\bibitem{GKM17}
Mohsen Ghaffari, Fabian Kuhn, and Yannic Maus.
\newblock On the complexity of local distributed graph problems.
\newblock In {\em 49th Annual ACM SIGACT Symposium on Theory of Computing
  (STOC)}, pages 784--797, 2017.

\bibitem{GL18}
Mohsen Ghaffari and Jason Li.
\newblock Improved distributed algorithms for exact shortest paths.
\newblock In {\em Proc. of the 50th Annual ACM SIGACT Symposium on Theory of
  Computing (STOC)}, pages 431--444, 2018.

\bibitem{GL18-2}
Mohsen Ghaffari and Jason Li.
\newblock New distributed algorithms in almost mixing time via transformations
  from parallel algorithms.
\newblock In {\em Proceedings of 32nd International Symposium on Distributed
  Computing (DISC)}, pages 31:1--31:16, 2018.

\bibitem{grandoni2008distributed}
Fabrizio Grandoni, Jochen K{\"o}nemann, and Alessandro Panconesi.
\newblock Distributed weighted vertex cover via maximal matchings.
\newblock {\em ACM Transactions on Algorithms (TALG)}, pages 1--12, 2008.

\bibitem{HIZ16-2}
Bernhard Haeupler, Taisuke Izumi, and Goran Zuzic.
\newblock Near-optimal low-congestion shortcuts on bounded parameter graphs.
\newblock In {\em Proceedings of 30nd International Symposium on Distributed
  Computing (DISC)}, pages 158--172, 2016.

\bibitem{H18}
Bernhard Haeupler and Jason Li.
\newblock Faster distributed shortest path approximations via shortcuts.
\newblock In {\em 32nd International Symposium on Distributed Computing
  (DISC)}, pages 33:1--33:14, 2018.

\bibitem{HW12}
Stephan Holzer and Roger Wattenhofer.
\newblock Optimal distributed all pairs shortest paths and applications.
\newblock In {\em Proc. of the 2012 ACM Symposium on Principles of Distributed
  Computing (PODC)}, pages 355--364, 2012.

\bibitem{HK73}
John~E Hopcroft and Richard~M Karp.
\newblock An n\^{}5/2 algorithm for maximum matchings in bipartite graphs.
\newblock {\em SIAM Journal on computing}, pages 225--231, 1973.

\bibitem{II86}
Amos Israeli and Alon Itai.
\newblock A fast and simple randomized parallel algorithm for maximal matching.
\newblock {\em Information Processing Letters}, pages 77--80, 1986.

\bibitem{jurdzinski2018mst}
Tomasz Jurdzinski and Krzysztof Nowicki.
\newblock {MST} in \emph{O}(1) rounds of congested clique.
\newblock In {\em Proceedings of the Twenty-Ninth Annual ACM-SIAM Symposium on
  Discrete Algorithms (SODA)}, pages 2620--2632, 2018.

\bibitem{KKOI2019}
Naoki Kitamura, Hirotaka Kitagawa, Yota Otachi, and Taisuke Izumi.
\newblock Low-congestion shortcut and graph parameters.
\newblock In {\em Proccedings of 33rd International Symposium on Distributed
  Computing (DISC)}, pages 25:1--25:17, 2019.

\bibitem{koufogiannakis2009distributed}
Christos Koufogiannakis and Neal~E Young.
\newblock Distributed and parallel algorithms for weighted vertex cover and
  other covering problems.
\newblock In {\em 28th ACM symposium on Principles of distributed computing
  (PODC)}, pages 171--179, 2009.

\bibitem{FTR06}
Fabian Kuhn, Thomas Moscibroda, and Roger Wattenhofer.
\newblock The price of being near-sighted.
\newblock In {\em 17th Annual ACM-SIAM Symposium on Discrete Algorithms
  (SODA)}, pages 1109557--1109666, 2006.

\bibitem{KMW16}
Fabian Kuhn, Thomas Moscibroda, and Roger Wattenhofer.
\newblock Local computation: Lower and upper bounds.
\newblock {\em Journal of the ACM (JACM)}, pages 1--44, 2016.

\bibitem{SD98}
Shay Kutten and David Peleg.
\newblock Fast distributed construction of small \emph{k}-dominating sets and
  applications.
\newblock {\em Journal of Algorithms}, pages 40--66, 1998.

\bibitem{LP13-2}
Christoph Lenzen and David Peleg.
\newblock Efficient distributed source detection with limited bandwidth.
\newblock In {\em Proc. of the 2013 ACM Symposium on Principles of Distributed
  Computing (PODC)}, pages 375--382, 2013.

\bibitem{LPP08}
Zvi Lotker, Boaz Patt-Shamir, and Seth Pettie.
\newblock Improved distributed approximate matching.
\newblock pages 1--17, 2015.

\bibitem{Nanongkai14}
Danupon Nanongkai.
\newblock Distributed approximation algorithms for weighted shortest paths.
\newblock In {\em Proc. of the 46th Annual ACM Symposium on Theory of Computing
  (STOC)}, pages 565--573, 2014.

\bibitem{nanongkai2014almost}
Danupon Nanongkai and Hsin-Hao Su.
\newblock Almost-tight distributed minimum cut algorithms.
\newblock In {\em International Symposium on Distributed Computing}, pages
  439--453. Springer, 2014.

\bibitem{PT11}
David Pritchard and Ramakrishna Thurimella.
\newblock Fast computation of small cuts via cycle space sampling.
\newblock {\em ACM Transactions on Algorithms (TALG)}, pages 1--30, 2011.

\bibitem{lowerbound}
Atish~Das Sarma, Stephan Holzer, Liah Kor, Amos Korman, Danupon Nanongkai,
  Gopal Pandurangan, David Peleg, and Roger Wattenhofer.
\newblock Distributed verification and hardness of distributed approximation.
\newblock In {\em Proceedings of the 43th Annual ACM SIGACT Symposium on Theory
  of Computing (STOC)}, pages 363--372, 2011.

\bibitem{Vazirani2020}
Vijay~V Vazirani.
\newblock A proof of the mv matching algorithm.
\newblock {\em arXiv}, 2020.

\end{thebibliography}
\end{document}